\DeclareMathOperator{\var}{var}
\theoremstyle{plain}
\newtheorem{problem}{Problem}
\newtheorem{theorem}{Theorem}
\newtheorem{lemma}{Lemma}
\newtheorem{corollary}{Corollary}
\newtheorem{proposition}{Proposition}
\newtheorem{conjecture}{Conjecture}
\newtheorem{definition}{Definition}
\title{Unambiguous 1-Uniform Morphisms}
\author{Hossein Nevisi\thanks{Corresponding author.} \qquad \qquad Daniel Reidenbach
\institute{Department of Computer Science\\
Loughborough University\\
Loughborough, LE11 3TU, United Kingdom}
\email{H.Nevisi@lboro.ac.uk \qquad D.Reidenbach@lboro.ac.uk}
}
\begin{document}
\maketitle

\begin{abstract}
A morphism $\sigma$ is unambiguous with respect to a word $\alpha$ if there is no other morphism $\tau$ that maps $\alpha$ to the same image as $\sigma$. In the present paper we study the question of whether, for any given word, there exists an unambiguous 1-uniform morphism, i.\,e., a morphism that maps every letter in the word to an image of length $1$.
\end{abstract}

\section{Introduction}\label{sec:intro}

If, for a morphism $\sigma: \Delta^* \to \Sigma^*$ (where $\Delta$ and $\Sigma$ are arbitrary alphabets) and a word $\alpha \in \Delta^*$, there exists another morphism $\tau$ mapping $\alpha$ to $\sigma(\alpha)$, then $\sigma$ is called \emph{ambiguous} with respect to $\alpha$; if such a $\tau$ does not exist, then $\sigma$ is \emph{unambiguous}. For example, the morphism $\sigma_0: \{A,B,C\}^* \to \{a,b\}^*$ -- given by $\sigma_0(A) := a$, $\sigma_0(B) := a$, $\sigma_0(C) := b$ -- is ambiguous with respect to the word $\alpha_0 := ABCACB$, since the morphism $\tau_0$ -- defined by $\tau_0(A) := \varepsilon$ (i.\,e., $\tau_0$ maps $A$ to the empty word), $\tau_0(B) := a$, $\tau_0(C) := ab$ -- satisfies $\tau_0(\alpha_0) = \sigma_0(\alpha_0)$ and, for a symbol $X$ occuring in $\alpha$, $\tau_0(X) \neq \sigma_0(X)$:
\begin{eqnarray*}
 \sigma_0(\alpha_0) = & \overbrace{\phantom{bb}a\phantom{bb}}^{\sigma_0(A)} \overbrace{\phantom{bb}a\phantom{bb}}^{\sigma_0(B)} \overbrace{\phantom{bb}b\phantom{bb}}^{\sigma_0(C)} \overbrace{\phantom{bb}a\phantom{bb}}^{\sigma_0(A)} \overbrace{\phantom{bb}b\phantom{bb}}^{\sigma_0(C)} \overbrace{\phantom{bb}a\phantom{bb}}^{\sigma_0(B)} & = \tau_0(\alpha_0) \, .\\[-1.5em]
\hphantom{ \sigma_0(\alpha_0) =} & \underbrace{\hphantom{bbabb}}_{\tau_0(B)} \underbrace{\hphantom{bbbbabbbbb}}_{\tau_0(C)} \underbrace{\hphantom{bbbbabbbbb}}_{\tau_0(C)} \underbrace{\hphantom{bbabb}}_{\tau_0(B)} & \hphantom{\tau_0(\alpha_0) \, .}
\end{eqnarray*}
It can be verified with moderate effort that, e.\,g., the morphism $\sigma_1: \{A,B,C\}^* \to \{a,b\}^*$ -- given by $\sigma_1(A) := a$, $\sigma_1(B) := ab$, $\sigma_1(C) := b$ -- is unambiguous with respect to $\alpha_0$.\par
The potential ambiguity of morphisms is relevant to various concepts in the combinatorial theory of morphisms, such as pattern languages (see, e.\,g., Mateescu and Salomaa~\cite{mat:pat}), equality sets (see, e\,g., Harju and Karhum\"aki~\cite{har:mor}) and word equations (see, e.\,g., Choffrut~\cite{cho:equ}). This relation is best understood for inductive inference of pattern languages, where it has been shown that a preimage can be computed from some of its morphic images if and only if these images have been generated by morphisms with a restricted ambiguity (see, e.\,g., Reidenbach~\cite{rei:dis}). Hence, intuitively speaking, unambiguous morphisms have a desirable, namely structure-preserving, property in such a context, and therefore previous literature on the ambiguity of morphisms mainly studies the question of the \emph{existence} of unambiguous morphisms for arbitary words. In the initial paper, Freydenberger, Reidenbach and Schneider~\cite{fre:una2} show that there exists an unambiguous nonerasing morphism with respect to a word $\alpha$ if and only if $\alpha$ is not a \emph{fixed point} of a nontrivial morphism, i.\,e., there is no morphism $\phi$ satisfying $\phi(\alpha) = \alpha$ and, for a symbol $x$ in $\alpha$, $\phi(x) \neq x$. Freydenberger and Reidenbach~\cite{fre:the2} study those sets of words with respect to which so-called segmented morphisms are unambiguous, and these results lead to a refinement of the techniques used in \cite{fre:una2}. Schneider~\cite{sch:una2} and Reidenbach and Schneider~\cite{rei:res2} investigate the existence of unambiguous erasing morphisms -- i.\,e., morphisms that may map symbols to the empty word. Finally, Freydenberger, Nevisi and Reidenbach~\cite{fre:wea} study a definition of unambiguity that is completely restricted to nonerasing morphisms\footnote{Note that \cite{fre:una2,fre:the2} also deal with unambiguous \emph{nonerasing} morphisms, but they use a stronger notion of unambiguity that is based on arbitrary monoid morphisms. Hence, they call a morphism $\sigma$ unambiguous only if there is no other -- erasing or nonerasing -- morphism $\tau$ satisfying $\tau(\alpha) = \sigma(\alpha)$. In contrast to this, and in contrast to the present paper, \cite{fre:wea} disregards erasing morphisms $\tau$. Consequently, in the definition of unambiguity studied by \cite{fre:wea}, our initial example $\sigma_0$ is considered (``weakly'') \emph{unambiguous} with respect to $\alpha_0$, since all morphisms $\tau$ with $\tau(\alpha_0) = \sigma_0(\alpha_0)$ are erasing morphisms.}, and they provide a characterisation of those words with respect to which there exist unambiguous morphisms $\sigma: \Delta^+ \to \Sigma^+$ in such a context (this characterisation does not hold for binary target alphabets $\Sigma$, though).\par
In the present paper, we study the existence of unambiguous \emph{1-uniform} morphisms for arbitrary words, i.\,e., just as our initial example $\sigma_0$, these morphisms map every symbol in the preimage to an image of length $1$. In order to obtain unrestricted results, we wish to consider words over an unbounded alphabet $\Delta$ as morphic preimages. Therefore, we assume $\Delta := \mathbb{N}$; in accordance with the existing literature in the field, we call any word $\alpha \in \mathbb{N}^*$ a \emph{pattern}, and we call any symbol $x \in \mathbb{N}$ occurring in $\alpha$ a \emph{variable}. Thus, more formally, we wish to investigate the following problem:
\begin{problem}\label{prob:main}
Let $\alpha \in \mathbb{N}^*$ be a pattern, and let $\Sigma$ be an alphabet. Does there exists a 1-uniform morphism $\sigma: \mathbb{N}^* \to \Sigma^*$ that is unambiguous with respect to $\alpha$, i.\,e., there is no morphism $\tau: \mathbb{N}^* \to \Sigma^*$ satisfying $\tau(\alpha) = \sigma(\alpha)$ and, for a variable $x$ occurring in $\alpha$, $\tau(x) \neq \sigma(x)$?
\end{problem}
There are two main reasons why we study this question: Firstly, any insight into the existence of unambiguous 1-uniform morphisms improves the construction by Freydenberger et al.~\cite{fre:una2}, which provides comprehensive results on the existence of unambiguous nonerasing morphisms, but is based on morphisms that are often much more involved than required. This can be illustrated using our above example pattern $\alpha_0$ (now interpreted as $\alpha_0 := 1 \cdot 2 \cdot 3 \cdot 1 \cdot 3 \cdot 2$ in order to fit with the definition of patterns as words over $\mathbb{N}$). Here, the unambiguous morphism $\sigma_1$ -- which is not 1-uniform, but still of very limited complexity -- produces a morphic image of length $8$, whereas the unambiguous morphism for $\alpha_0$ defined in \cite{fre:una2} leads to a morphic image of length $162$. This substantial complexity of known unambiguous morphisms has a severe effect on the runtime of inductive inference procedures for pattern languages, which, as mentioned above, are necessarily based on such morphisms. Thus, any insight into the existence of uncomplex unambiguous morphisms is not only of intrinsic interest, but is also important from a more applied point of view. Secondly, as shown by $\sigma_0(\alpha_0)$, the images under 1-uniform morphisms have a structure that is very close to that of their preimages. This is because, whenever the pattern contains more different variables than there are letters in the target alphabet, a 1-uniform morphism reduces the complexity of the preimage by mapping certain variables to the same image. Thus, such a \emph{morphic simplification} and its potential ambiguity are a very basic phenomenon in the combinatorial theory of morphisms. Our studies shall suggest that Problem~\ref{prob:main} is nevertheless a challenging question, and we shall demonstrate that it is related to a number of other concepts and problems in combinatorics on words.\par
Note that, due to space constraints, this extended abstract contains just a few proofs, focussing on those that are reasonably short and suitable to illustrate our basic proof techniques.

\section{Definitions and Preliminary Results}\label{sec:def}

For the definitions of \emph{patterns}, \emph{variables}, \emph{1-uniform morphisms}, \emph{(un)ambiguous morphisms}, \emph{fixed points} of nontrivial morphisms, and the symbol $\varepsilon$, Section~\ref{sec:intro} can be consulted.\par
Let $A$ be an \emph{alphabet}, i.\,e., an enumerable set of symbols. A \emph{word (over $A$)} is a a finite sequence of symbols taken from $A$. The set $A^*$ is the set of all words over $A$, and $A^+ := A^* \setminus \{ \varepsilon \}$. For the \emph{concatenation} of two words $w_1, w_2$, we write $w_1 \cdot w_2$ or simply $w_1 w_2$. The notion $|x|$ stands for the size of a set $x$ or the length of a word $x$. For any word $w \in A^*$, the notation $|w|_x$ stands for the number of occurrences of the letter $x$ in $w$. The symbol $[\ldots]$ is used to omit some canonically defined parts of a given word, e.\,g., $\alpha = 1 \cdot 2 \cdot [\ldots] \cdot 5$ stands for $\alpha = 1 \cdot 2 \cdot 3 \cdot 4 \cdot 5$. We call a word $v \in A^*$ a \emph{factor} of a word $w \in A^*$ if, for some $u_1, u_2 \in A^*$, $w = u_1 v u_2$; moreover, if $v$ is a factor of $w$ then we say that $w$ contains $v$ and denote this by $v \sqsubseteq w$ or $w = \cdots v \cdots$. If $v \neq w$, then we say that $v$ is a \emph{proper} factor of $w$ and denote this by $v \sqsubset w$. If $u_1 = \varepsilon$, then $v$ is a \emph{prefix} of $w$, and if $u_2 = \varepsilon$, then $v$ is a \emph{suffix} of $w$. For every letter $x$ in $w$, $L_x := \{ y \in A \mid w = \cdots y \cdot x \cdots \} \cup L'_x$ and $R_x := \{ y \in A \mid w = \cdots x \cdot y \cdots \} \cup R'_x$, where $L'_x = \{ \varepsilon \}$ if $w = x \cdots$ and  $L'_x = \emptyset$ if $w \neq x \cdots$, and $R'_x = \{ \varepsilon \}$ if $w = \cdots x$ and  $R'_x = \emptyset$ if $w \neq \cdots x$. We refer to the sets $L_x$ and $R_x$ as \emph{neighbourhood sets}.\par
For alphabets $A,B$, a mapping $h: A^* \to B^*$ is a \emph{morphism} if $h$ is compatible with the concatenation, i.\,e., for all $v,w \in A^*$, $h(v) \cdot h(w) = h(vw)$. We call $B$ the \emph{target alphabet} of $h$. The morphism $h$ is said to be \emph{nonerasing} if, for every $x \in A$, $h(x) \neq \varepsilon$. A morphism is called a \emph{renaming} if it is injective and 1-uniform. We additionally call any word $v$ a renaming of a word $w$ if there is a morphism $h$ that is a renaming and satisfies $h(w) = v$. A word $w \in A^*$ is said to be \emph{in canonical form} if it is lexicographically minimal (with regard to any fixed order on $A$) among all its renamings in $A^*$. \par
With regard to an arbitrary pattern $\alpha \in \mathbb{N}^*$, $\var(\alpha)$ denotes the set of all variables occurring in $\alpha$. If we say that a pattern is in canonical form, then this shall always refer to the usual order on $\mathbb{N}$, i.\,e., $1 < 2 < 3 < \ldots$\,.\par
The question of whether a pattern $\alpha$ is a fixed point of a nontrivial morphism (which can be decided in polynomial time, see Holub~\cite{hol:pol}) is equivalent to a number of other concepts in combinatorics on words. More precisely, $\alpha$ is a fixed point of a nontrivial morphism iff $\alpha$ is \emph{prolix} iff $\alpha$ is \emph{morphically imprimitive} iff there exist a certain characteristic factorisation of $\alpha$; these equivalences are explained by Reidenbach and Schneider~\cite{rei:mor2} in more detail. Results on unambiguous morphisms have been stated using any of these concepts. In the present paper, our presentation shall focus on the notion of fixed points. Therefore, we can now paraphrase a simple yet fundamental insight by Freydenberger et al.~\cite{fre:una2} -- which implies that an answer to Problem~\ref{prob:main} is trivial for those patterns that are fixed points of nontrivial morphisms -- as follows:
\begin{theorem}[Freydenberger et al.~\cite{fre:una2}]\label{thm:prolix_ambig}
Let $\alpha \in \mathbb{N}^*$ be a fixed point of a nontrivial morphisms, and let $\Sigma$ be any alphabet. Then every nonerasing morphism $\sigma: \mathbb{N}^* \to \Sigma^*$ is ambiguous with respect to $\alpha$.
\end{theorem}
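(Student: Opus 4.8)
The plan is to exploit the given morphism directly rather than to construct an ambiguity witness from scratch. Let $\phi:\mathbb{N}^*\to\mathbb{N}^*$ be a nontrivial morphism with $\phi(\alpha)=\alpha$, and define $\tau:=\sigma\circ\phi$. Then $\tau$ is a morphism from $\mathbb{N}^*$ to $\Sigma^*$ and, by construction, $\tau(\alpha)=\sigma(\phi(\alpha))=\sigma(\alpha)$. Hence the only thing left to verify for ambiguity is that $\tau$ and $\sigma$ disagree on some variable of $\alpha$, i.\,e.\ that $\sigma(\phi(x))\neq\sigma(x)$ for some $x\in\var(\alpha)$. My strategy is to locate a variable that $\phi$ maps to $\varepsilon$: since $\sigma$ is nonerasing, such a variable immediately yields the required disagreement.

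The key step is therefore the combinatorial claim that a nontrivial morphism fixing $\alpha$ must erase at least one variable occurring in $\alpha$. First I would observe that $\phi$ restricts to a morphism on $\var(\alpha)^*$, because any variable appearing in some $\phi(x)$ (with $x\in\var(\alpha)$) would also appear in $\phi(\alpha)=\alpha$. Comparing lengths in $\phi(\alpha)=\alpha$ then gives $\sum_{x\in\var(\alpha)}|\alpha|_x\,(|\phi(x)|-1)=0$. Suppose, for contradiction, that $\phi$ erases no variable of $\alpha$, so that $|\phi(x)|\geq 1$ for every $x\in\var(\alpha)$. Then every summand is nonnegative, and since each coefficient $|\alpha|_x$ is positive, the equation forces $|\phi(x)|=1$ for all $x\in\var(\alpha)$; that is, $\phi$ acts as a $1$-uniform morphism on the variables of $\alpha$. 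Writing $\alpha=a_1 a_2\cdots a_m$ and comparing the two words $\phi(\alpha)=\phi(a_1)\cdots\phi(a_m)$ and $\alpha=a_1\cdots a_m$ letter by letter (they have equal length $m$) yields $\phi(a_i)=a_i$ for every $i$, and hence $\phi(x)=x$ for all $x\in\var(\alpha)$. This contradicts the nontriviality of $\phi$, proving the claim.

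It then remains only to assemble the pieces. Let $e\in\var(\alpha)$ be a variable with $\phi(e)=\varepsilon$, as provided by the claim. Then $\tau(e)=\sigma(\phi(e))=\sigma(\varepsilon)=\varepsilon$, whereas $\sigma(e)\neq\varepsilon$ because $\sigma$ is nonerasing; thus $\tau(e)\neq\sigma(e)$. Together with $\tau(\alpha)=\sigma(\alpha)$ this shows that $\tau$ witnesses the ambiguity of $\sigma$ with respect to $\alpha$, and since $\sigma$ was an arbitrary nonerasing morphism, the statement follows. I expect the length-balance claim to be the main obstacle: the reduction via $\tau=\sigma\circ\phi$ is immediate, but one must rule out the possibility that $\phi$ is nontrivial yet nonerasing, and it is precisely the combination of the length identity with the rigidity of $1$-uniform fixing morphisms that excludes this case.
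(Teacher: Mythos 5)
The paper does not prove this statement at all: it is imported verbatim from Freydenberger, Reidenbach and Schneider~\cite{fre:una2} (the text merely ``paraphrases'' their result in the language of fixed points), so there is no in-paper proof to compare against. Judged on its own, your argument is correct and complete. The reduction $\tau:=\sigma\circ\phi$ immediately gives $\tau(\alpha)=\sigma(\alpha)$, and your central combinatorial claim is sound: since $\phi(\alpha)=\alpha$ forces $\phi(\var(\alpha))\subseteq\var(\alpha)^*$, the length identity $\sum_{x\in\var(\alpha)}|\alpha|_x\,(|\phi(x)|-1)=0$ shows that a $\phi$ that erases no variable of $\alpha$ must be $1$-uniform on $\var(\alpha)$, and positional comparison then forces $\phi$ to be the identity on $\var(\alpha)$, contradicting nontriviality (which, by the paper's definition, is witnessed by a variable \emph{occurring in} $\alpha$ --- a detail you correctly rely on). Hence some $e\in\var(\alpha)$ has $\phi(e)=\varepsilon$, and nonerasingness of $\sigma$ gives $\tau(e)=\varepsilon\neq\sigma(e)$. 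This is essentially the standard argument; the original reference reaches the same conclusion via the characteristic (prolix) factorisation of fixed points, which is equivalent but heavier machinery than your direct length-counting step.
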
\noindent
Hence, we can safely restrict our subsequent considerations to those patterns that are not fixed points.

\section{Fixed Target Alphabets}\label{sec:fixed}

In the the present section, we describe a number of conditions on the existence of unambiguous 1-uniform morphisms $\sigma: \mathbb{N}^* \to \Sigma^*$ with a \emph{fixed} target alphabet $\Sigma$, i.\,e., the size of $\Sigma$ does not depend on the number of variables occurring in $\alpha$. While the main result by Freydenberger et al.~\cite{fre:una2} demonstrates that the set of patterns with an unambiguous \emph{nonerasing} morphisms is independent of the size of $\Sigma$ (provided that $|\Sigma| \geq 2$), our initial example $\alpha_0$ and all patterns $\alpha_m := 1 \cdot 1 \cdot 2 \cdot 2 \cdot [\ldots] \cdot m \cdot m$ with $m \geq 4$ do not have an unambiguous \emph{1-uniform} morphism $\sigma: \mathbb{N}^* \to \Sigma^*$ for binary alphabets $\Sigma$. In contrast to this, such morphisms can be given for ternary (and, thus, larger) alphabets:
\begin{theorem}\label{thm:Thue}
Let $m\in\mathbb{N}$, $m\geq 4$, let $\Sigma$ be an alphabet, and let $\alpha_m := 1\cdot 1 \cdot 2\cdot 2\cdot [\ldots]\cdot m\cdot m$. There exists a 1-uniform morphism $\sigma: \mathbb{N}^* \to \Sigma^*$ that is unambiguous with respect to $\alpha_m$ if and only if $|\Sigma| \geq 3$.
\end{theorem}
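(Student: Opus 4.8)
The plan is to reduce the whole question to the \emph{square-freeness} of the word $u := \sigma(1)\sigma(2)\cdots\sigma(m)$ over $\Sigma$ formed by the images of the variables occurring in $\alpha_m$. Write $c_i := \sigma(i)$, so that $\sigma(\alpha_m) = c_1 c_1 c_2 c_2 \cdots c_m c_m =: W$, a word of length $2m$. The first observation is that any competing morphism $\tau$ with $\tau(\alpha_m) = W$ satisfies $\sum_{i=1}^m |\tau(i)| = m$; hence, if $\tau$ is nonerasing then every $|\tau(i)| = 1$, and comparing letters forces $\tau(i) = c_i$. Thus ambiguity can only be caused by an \emph{erasing} $\tau$, and the task becomes: decide for which choices of $c_1, \ldots, c_m$ no erasing $\tau$ reproduces $W$.

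For the direction ``$\sigma$ exists $\Rightarrow |\Sigma| \ge 3$'' I would argue contrapositively. If $|\Sigma| = 1$ then $W = a^{2m}$ and, e.g., $\tau(1) := aa$, $\tau(2) := \varepsilon$, $\tau(i) := a$ for $i \ge 3$ gives $\tau(\alpha_m) = W$ while $\tau(1) \neq \sigma(1)$, so $\sigma$ is ambiguous. If $|\Sigma| = 2$ then, since $m \ge 4$ and every binary square-free word has length at most $3$ (Thue), the word $u$ contains a square, i.e.\ there are $s, k$ with $c_s c_{s+1} \cdots c_{s+k-1} = c_{s+k} \cdots c_{s+2k-1}$. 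Writing $V := c_s c_s c_{s+1} c_{s+1} \cdots c_{s+k-1} c_{s+k-1}$ for the $\sigma$-image of the first half of this square, the doubling structure makes the $\sigma$-image of the whole square equal to $VV$. I would then define $\tau(s) := V$, $\tau(s+1) = \cdots = \tau(s+2k-1) := \varepsilon$, and $\tau(i) := c_i$ otherwise; this preserves $W$ (as $\tau(s)^2 = VV$) but has $|\tau(s)| = 2k \ge 2$, so $\tau(s) \neq \sigma(s)$ and $\sigma$ is ambiguous.

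For the direction ``$|\Sigma| \ge 3 \Rightarrow \sigma$ exists'' I would choose $u$ to be a length-$m$ prefix of an infinite square-free word over a ternary subalphabet of $\Sigma$ (which exists by Thue) and set $\sigma(i) := u[i]$. The claim is that a square-free $u$ yields an unambiguous $\sigma$. The heart of the argument is the following structural lemma: \emph{if $u$ is square-free then every square factor of $W$ has length $2$} (these being exactly the inherent blocks $c_i c_i$). Granting this, every $\tau$ with $\tau(\alpha_m) = W$ has each $\tau(i)\tau(i)$ occurring as a factor of $W$; if some $|\tau(i)| \ge 2$ this would be a square factor of length $\ge 4$, contradicting the lemma, so every $|\tau(i)| \le 1$. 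Since $\sum_i |\tau(i)| = m$ over $m$ variables, all lengths equal $1$, $\tau$ is nonerasing, and hence $\tau = \sigma$; thus $\sigma$ is unambiguous.

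The main obstacle is the structural lemma. I would prove it by analysing a hypothetical square $xx \sqsubseteq W$ with $|x| = \ell \ge 2$ via the position map $W[q] = c_{\lceil q/2 \rceil}$. For even $\ell$, the periodicity $W[p+i] = W[p+\ell+i]$ translates directly into $c_q = c_{q + \ell/2}$ over a block of $\ell/2$ consecutive indices, i.e.\ a square of length $\ell$ in $u$ -- contradiction. For odd $\ell \ge 3$, I would use the fact that square-freeness of $u$ forces $c_i \neq c_{i+1}$, so every maximal run of $W$ has length exactly $2$ and begins at an odd position; since the two copies of $x$ are offset by the odd number $\ell$, a run-boundary strictly inside the first copy would face a run-\emph{continuation} at the matching position of the second copy, contradicting $x = x$. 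Getting the parity bookkeeping and the edge cases of this run-structure argument exactly right is the delicate part; the remaining steps are routine length counting and the invocation of Thue's classical results on square-free words.
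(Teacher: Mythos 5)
Your proposal is correct and follows essentially the same route as the paper: for $|\Sigma|\le 2$ the unavoidability of squares over binary alphabets yields an erasing competitor $\tau$, and for $|\Sigma|\ge 3$ one doubles each letter of Thue's ternary square-free word and observes that the only square factors of the resulting image are the length-$2$ blocks $c_ic_i$, which forces every $\tau$ to be $1$-uniform and hence equal to $\sigma$. The paper leaves both the construction of the erasing $\tau$ and the structural lemma as ``easily verified''; your parity/run analysis of the odd- and even-length square cases is a sound way to fill in exactly those omitted details.
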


\begin{proof}
Since squares cannot be avoided over unary and binary alphabets, it can be shown with very limited effort that there is no unambiguous 1-uniform morphism $\sigma: \mathbb{N}^* \to \Sigma^*$ with respect to any $\alpha_m$ if $\Sigma$ does not contain at least three letters.\par
According to Thue~\cite{thu:ueb1}, there exists an infinite square-free word over a ternary alphabet. Let this word be $w$. Thus,
\begin{displaymath}
w=abcacbabcbacabcacbaca\cdots\:.
\end{displaymath}
We define the word $w'$ by repeating every letter of $w$ twice. Consequently,
\begin{displaymath}
w'=aabbccaaccbbaabbccbbaaccaabbccaaccbbaaccaa\cdots\:.
\end{displaymath}
We now define a 1-uniform morphism $\sigma: \mathbb{N}^*\to \{a,b,c\}^*$ such that $\sigma(\alpha_m)$ is a prefix of $w'$. Since $w$ is square-free, the only square factors of $w'$ are $aa$, $bb$ and $cc$. Hence, it can be easily verified that $\sigma$ is unambiguous with respect to $\alpha_m$.
\end{proof}

Thus -- and just as for the equivalent problem on unambiguous \emph{erasing} morphisms (see Schneider~\cite{sch:una2}) -- any characteristic condition on the existence of unambiguous 1-uniform morphisms needs to incorporate the size of $\Sigma$, which suggests that such criteria might be involved. Therefore, our results in this section are restricted to \emph{sufficient} conditions on the existence of unambiguous 1-uniform morphisms.\par
Our first criterion is based on (un)avoidable patterns and is, thus, related to the above-mentioned property of the patterns $\alpha_m$:
\begin{theorem}\label{thm:noncross}
Let $n\in\mathbb{N}$, $\beta := r_1\cdot r_2\cdot[\ldots]\cdot r_{\lceil n/2\rceil}$ and $\alpha := 1^{r_1}\cdot 2^{r_1}\cdot 3^{r_2}\cdot 4^{r_2}\cdot[\ldots]\cdot n^{(r_{\lceil n/2\rceil})}$ with $r_i \geq 2$ for every $i$, $1\leq i\leq \lceil n/2\rceil$. If $\beta$ is square-free, then there exists a 1-uniform morphism $\sigma : \mathbb{N}^*\to \{a,b\}^*$ that is unambiguous with respect to $\alpha$.
\end{theorem}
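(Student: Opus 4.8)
The plan is to exhibit an explicit candidate morphism and then to prove its unambiguity by a periodicity analysis of run lengths that is driven entirely by the square-freeness of $\beta$. First I would take the $1$-uniform morphism $\sigma:\mathbb{N}^*\to\{a,b\}^*$ that assigns the two letters in strict alternation, setting $\sigma(2i-1):=a$ and $\sigma(2i):=b$ for every $i$ (and, for odd $n$, sending the final unpaired variable to $a$). With $k:=\lceil n/2\rceil$ this gives $\sigma(\alpha)=a^{r_1}b^{r_1}a^{r_2}b^{r_2}\cdots a^{r_k}b^{r_k}=:W$, a word whose maximal runs strictly alternate between $a$ and $b$ and have lengths $r_1,r_1,r_2,r_2,\ldots,r_k,r_k$. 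Since $\sigma$ is $1$-uniform, any morphism $\tau$ with $\tau(\alpha)=W$ and $\tau(x)=\sigma(x)$ for all $x\in\var(\alpha)$ forces $\tau=\sigma$ position by position; hence the theorem reduces to the combinatorial claim that every factorisation $W=v_1^{r_1}v_2^{r_1}v_3^{r_2}v_4^{r_2}\cdots v_{2k}^{r_k}$ into words $v_j\in\{a,b\}^*$ satisfies $v_{2i-1}=a$ and $v_{2i}=b$ for all $i$.

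I would establish this claim by induction on $k$, pinning down the leftmost block and then deleting it. The core is to show $v_1=a$. A nonempty $v_1$ makes $v_1^{r_1}$ a prefix of $W$, so $W$ has period $p:=|v_1|$ throughout this prefix, whose length $r_1p$ is at least $2p$ because $r_1\ge 2$. A short period $2\le p\le r_1$ is impossible: within this prefix, periodicity equates the last letter $a$ of the run $a^{r_1}$ with the letter $p$ positions to its right, which lies in the run $b^{r_1}$, giving $a=b$. For $p>r_1$ I would use that the factor $ba$ occurs in $W$ exactly at block boundaries; periodicity then makes the block boundaries inside the periodic prefix an arithmetic progression with common difference $p$ starting at the first boundary $2r_1$, and the absence of any boundary further left fixes the phase. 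This forces $p$ to span a whole number $m$ of blocks and the run lengths to satisfy $r_{m+i}=r_i$ over a stretch long enough to exhibit the square $(r_1\cdots r_m)^2$ as a prefix of $\beta$, contradicting square-freeness. Thus $v_1=a$; the symmetric argument on the suffix $b^{r_1}a^{r_2}\cdots$ gives $v_2=b$; and since $r_2\cdots r_k$ is a factor of $\beta$ and therefore again square-free, removing $a^{r_1}b^{r_1}$ and invoking the induction hypothesis on $a^{r_2}b^{r_2}\cdots$ closes the step (the base case $k=1$, which also absorbs the degenerate last block for odd $n$, is immediate because $a^{r}b^{r}$ is primitive).

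The main obstacle is to exclude erasing and absorbing factorisations, in which an erased leading variable lets a later factor spill across several blocks; this is exactly the behaviour a non-square-free $\beta$ permits, as witnessed by $\beta=2\cdot 3\cdot 2\cdot 3$, where $W=(a^2b^2a^3b^3)^2$ and the choice $v_1:=a^2b^2a^3b^3$ with all remaining $v_j:=\varepsilon$ is a genuine second morphism, so that full square-freeness, not merely the distinctness of adjacent $r_i$, is indispensable. To rule this out when $\beta$ is square-free, I would read the factorisation from left to right and apply the periodicity analysis to each factor as it is met: a factor that is a power of a single run propagates a divisibility relation $r_{i+1}\mid r_i$ down the blocks, which square-freeness (forbidding equal adjacent $r_i$) sharpens into a strictly decreasing chain that must terminate, while a factor spanning whole blocks again forces a square in $\beta$. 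In either case the process is driven to a block whose image is the primitive word $a^{r}b^{r}$, which cannot be written as $v^{e}$ with $e\ge 2$, and this contradiction completes the argument.
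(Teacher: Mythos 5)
The paper itself omits the proof of Theorem~\ref{thm:noncross} (it is one of the proofs dropped for space), so your attempt can only be assessed on its own terms. Your choice of $\sigma$ (odd variables to $a$, even to $b$, so that $\sigma(\alpha)=a^{r_1}b^{r_1}a^{r_2}b^{r_2}\cdots=:W$) is surely the intended one, the reduction to the factorisation claim is correct, and your treatment of a \emph{nonempty} $v_1$ is essentially sound: the case $2\le p\le r_1$ is clean, and for $p>r_1$ the run-length alignment of $v_1v_1$ with $W$ does force either an immediate letter clash or $p=2(r_1+\cdots+r_m)$ and hence the square $(r_1\cdots r_m)^2$ as a prefix of $\beta$. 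Your example $\beta=2\cdot3\cdot2\cdot3$ showing that full square-freeness is needed is a good sanity check. (Minor point: the step for $v_2$ is not literally ``symmetric'', since the residual word $b^{r_1}a^{r_2}b^{r_2}\cdots$ does not have its first run doubled; this is repairable but needs its own argument.)

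The genuine gap is exactly where you place ``the main obstacle'', and your resolution of it does not hold up. First, your induction needs $v_1=a$ and $v_2=b$ outright, but the periodicity argument yields this only for nonempty factors; if $v_1=\varepsilon$ (or $v_1=v_2=\varepsilon$), the first nonempty factor $v_j$ carries an exponent $r_{\lceil j/2\rceil}$ that need not equal $r_1$, and the alignment between factor indices and blocks on which the induction rests is lost. Second, the mechanism you invoke to kill these cascades is unsound: a run $a^{r_i}$ whose own factors are erased is covered by a \emph{sum} of contributions $\ell_j\, r_{\lceil j/2\rceil}$ from several later factors, which gives no divisibility relation $r_{i+1}\mid r_i$; square-freeness forbids $r_i=r_{i+1}$ but certainly not $r_{i+1}>r_i$, so there is no strictly decreasing chain; and even if a cascade terminates, termination is not in itself a contradiction --- you would still have to show that the length budget $\sum_j e_j|v_j|=|W|$ cannot be met, or that some factor is eventually forced to straddle a run boundary, at which point your periodicity analysis must be redone starting at an arbitrary offset inside a run rather than at position~$1$. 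A workable repair would be either to strengthen the induction hypothesis so that it applies to every suffix $a^{r_i-t}b^{r_i}a^{r_{i+1}}\cdots$ together with an arbitrary tail of the factor sequence (showing the first nonempty factor is a single letter that exactly completes the current run, and using the global length constraint to exclude the all-short cascades), or to switch to a counting argument on the $\lceil n/2\rceil-1$ occurrences of the factor $ba$ in $W$, each of which, if internal to some $\tau(x)$, is duplicated $|\alpha|_x\ge2$ times. As written, the erasing case --- which you correctly identify as the heart of the matter --- is asserted rather than proved.
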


Our second criterion again holds for binary (and, thus, all larger) alphabets $\Sigma$. It features a rather restricted class of patterns, which, however, are minimal with regard to their length.
\begin{theorem}\label{thm:shortest_succinct_binary}
Let $n\in\mathbb{N}$, $n \geq 2$. If $n$ is even, let
\begin{displaymath}
\alpha := 1\cdot2\cdot[\ldots]\cdot n\cdot(n/2+1)\cdot 1\cdot (n/2+2)\cdot 2\cdot[\ldots]\cdot n\cdot n/2,
\end{displaymath}
and if $n$ is odd, let
\begin{displaymath}
\alpha := 1\cdot1\cdot2\cdot3\cdot[\ldots]\cdot n\cdot(\lceil n/2\rceil+1)\cdot 2\cdot (\lceil n/2\rceil+2)\cdot 3\cdot[\ldots]\cdot n\cdot \lceil n/2\rceil.
\end{displaymath}
Then $\alpha$ is a shortest pattern with $|\var(\alpha)| = n$ that is not a fixed point of a nontrivial morphism, and there exists a 1-uniform morphism $\sigma: \mathbb{N}^* \to \{a,b\}^*$ that is unambiguous with respect to $\alpha$.
\end{theorem}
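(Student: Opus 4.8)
The plan is to prove Theorem~\ref{thm:shortest_succinct_binary} in two separate parts: first the combinatorial lower bound on length, and then the explicit construction of an unambiguous 1-uniform morphism. For the length claim, I would first establish a general lower bound on the length of any pattern with $|\var(\alpha)| = n$ that is not a fixed point of a nontrivial morphism. The natural tool here is the characteristic factorisation mentioned in Section~\ref{sec:def} (equivalently, morphic primitivity): a pattern is a fixed point of a nontrivial morphism exactly when it admits a decomposition into blocks in which some variable is ``imprimitive''. I would argue that if $\alpha$ is too short relative to $n$, then some variable must occur exactly once and in a position that forces a nontrivial fixed-point morphism (by mapping that variable and its neighbourhood appropriately), so that every non-fixed-point pattern over $n$ variables needs a certain minimum number of repeated variables. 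Counting the minimal number of extra occurrences forced by morphic primitivity should give the exact length of the displayed patterns (roughly $n + \lceil n/2 \rceil$), and then I would verify by direct inspection that the two displayed patterns meet this bound and are indeed not fixed points.

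For the existence of the unambiguous morphism, the approach mirrors the technique sketched in the proof of Theorem~\ref{thm:Thue}: I would choose a 1-uniform morphism $\sigma: \mathbb{N}^* \to \{a,b\}^*$ whose image $\sigma(\alpha)$ is a carefully controlled binary word, and then show that no competing morphism $\tau$ with $\tau(\alpha) = \sigma(\alpha)$ can differ from $\sigma$ on any variable of $\alpha$. The key structural feature of these shortest patterns is that they interleave a ``first half'' listing all variables with a ``second half'' that pairs them up; I would exploit the neighbourhood sets $L_x$ and $R_x$ introduced in Section~\ref{sec:def}, assigning images so that each variable sits in a locally unambiguous context -- for instance, so that the boundaries between consecutive images in $\sigma(\alpha)$ are forced by the pattern of $a$'s and $b$'s. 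The standard argument is to suppose $\tau \neq \sigma$ and track how an alternative factorisation of $\sigma(\alpha)$ under $\tau$ would have to shift some image boundary; the interleaving structure then propagates this shift into a contradiction with the fixed total length of $\sigma(\alpha)$.

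The main obstacle I anticipate is the existence part over the \emph{binary} alphabet, because with only two letters the freedom to place distinguishing contexts is severely limited -- indeed, Theorem~\ref{thm:Thue} shows that for the very similar family $\alpha_m$ no binary unambiguous 1-uniform morphism exists at all. The crucial point is therefore that these shortest patterns, unlike $\alpha_m$, contain single occurrences of variables in the interleaved second half, and these singletons break the squares that obstruct the $\alpha_m$ construction. I would need to design $\sigma$ so that the images of the paired variables in the second half create mismatched binary factors at every potential alternative cut point, and the bulk of the work will be a case analysis verifying that any rigid shift of a boundary is blocked. I would handle the even and odd cases in parallel, since their patterns differ only in the treatment of the first variable, and I expect the odd case to require a small separate check for the initial factor $1 \cdot 1$.
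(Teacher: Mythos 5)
The paper omits its proof of this theorem (only the illustrating examples for $n=6$ and $n=5$ are given), so there is nothing to compare you against line by line; judged on its own terms, your proposal has genuine gaps. First, your length count does not match the patterns you are analysing: the displayed $\alpha$ has length exactly $2n$ (every variable occurs exactly twice, once in the initial listing and once in the interleaved second half), not ``roughly $n + \lceil n/2\rceil$''. The correct lower bound is also simpler than the machinery you invoke: if a variable $x$ occurs exactly once in a pattern $\alpha$ with $|\var(\alpha)|\geq 2$, say $\alpha = \beta\cdot x\cdot\gamma$, then the morphism mapping $x$ to $\alpha$ and every other variable to $\varepsilon$ is a nontrivial morphism fixing $\alpha$; hence every non-fixed-point pattern over $n\geq 2$ variables has length at least $2n$, and the displayed patterns attain this bound. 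Your own sketch is internally inconsistent with this, and you still owe an argument (not just ``direct inspection'') that the displayed patterns are not fixed points for general $n$.

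Second, and more seriously, the existence half is left essentially open. You never specify $\sigma$; the examples following the theorem pin it down as $\sigma(x):=a$ for $x\leq\lceil n/2\rceil$ and $\sigma(x):=b$ otherwise. Your proposed mechanism --- singleton occurrences in the second half ``break the squares'' --- is doubly mistaken: there are no singleton variables (minimality forces every variable to occur exactly twice), and the image is far from square-free anyway (for $n=6$ it is $aaabbbbababa$, which contains $bbbb$ and $abab$). So the argument cannot be an avoidance argument in the style of Theorem~\ref{thm:Thue}; it must be a direct case analysis on a hypothetical $\tau$ with $\tau(\alpha)=\sigma(\alpha)$ and $\tau\neq\sigma$. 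Since $|\sigma(\alpha)|=2n$ and each variable occurs twice, such a $\tau$ must send some variable to a word of length at least $2$ and some other variable to $\varepsilon$, and one then has to check that the two rigidly positioned occurrences of each variable make this impossible. That case analysis is the substance of the theorem, and your plan only gestures at it without providing the morphism or the cases.
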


The following examples illustrates Theorem~\ref{thm:shortest_succinct_binary} and its proof: For $n := 6$, $\alpha := 1 \cdot 2 \cdot 3 \cdot 4 \cdot 5 \cdot 6 \cdot 4 \cdot 1 \cdot 5 \cdot 2 \cdot 6 \cdot 3$, and the 1-uniform morphism $\sigma:\mathbb{N}^* \to \{a,b\}^*$ with $\sigma(1) := \sigma(2) := \sigma(3) := a$ and $\sigma(4) := \sigma(5) := \sigma(6) := b$ is unambiguous with respect to $\alpha$. For $n := 5$, $\alpha := 1 \cdot 1 \cdot 2 \cdot 3 \cdot4 \cdot 5 \cdot4 \cdot 2 \cdot 5 \cdot 3$, and the respective unambiguous morphism is given by $\sigma(1) := \sigma(2) := \sigma(3) := a$ and $\sigma(4) := \sigma(5) :=b$.\par
From Theorem~\ref{thm:shortest_succinct_binary} we can conclude that patterns $\alpha$ with unambiguous 1-uniform morphisms using a binary target alphabet exist for every cardinality of $\var(\alpha)$ and that corresponding examples can be given where every variable occurs just twice.

\section{Variable Target Alphabets}\label{sec:variable}

In order to continue our examination of Problem~\ref{prob:main}, we now relax one of the requirements of Section~\ref{sec:fixed}: We no longer investigate criteria on the existence of unambiguous 1-uniform morphisms for a fixed target alphabet $\Sigma$, but we permit $\Sigma$ to depend on the number of variables in the pattern $\alpha$ in question. Regarding this question, we conjecture the following statement to be true:
\begin{conjecture}\label{conj:variable}
Let $\alpha$ be a pattern with $|\var(\alpha)| \geq 4$. There exists an alphabet $\Sigma$ satisfying $|\Sigma| < |\var(\alpha)|$ and a 1-uniform morphism $\sigma: \mathbb{N}^* \to \Sigma^*$ that is unambiguous with respect to $\alpha$ if and only if $\alpha$ is not a fixed point of a nontrivial morphism.
\end{conjecture}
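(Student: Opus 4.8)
The plan is to treat the two implications of Conjecture~\ref{conj:variable} separately. The direction from left to right is immediate: a $1$-uniform morphism is nonerasing, so if $\alpha$ were a fixed point of a nontrivial morphism, then by Theorem~\ref{thm:prolix_ambig} every $1$-uniform morphism $\sigma:\mathbb{N}^*\to\Sigma^*$ would be ambiguous, for \emph{every} alphabet $\Sigma$; hence the existence of an unambiguous $1$-uniform morphism (with any target alphabet, in particular one satisfying $|\Sigma|<|\var(\alpha)|$) forces $\alpha$ to be morphically primitive. All the work therefore lies in the converse: assuming that $\alpha$ is not a fixed point and that $|\var(\alpha)|\geq 4$, we must exhibit a target alphabet $\Sigma$ with $|\Sigma|<|\var(\alpha)|$ together with an unambiguous $1$-uniform morphism.

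My starting point is the renaming $\rho$ that sends every variable to a distinct letter. Since $\alpha$ is not a fixed point, $\rho$ is unambiguous (any competitor $\tau$ would yield the nontrivial morphism $\rho^{-1}\circ\tau$ fixing $\alpha$). The natural attempt is then to save a single letter by merging one pair of variables $x_1,x_2$ onto a common letter $a$, keeping all other variables on pairwise distinct letters, so that $|\Sigma|=|\var(\alpha)|-1$. To analyse a putative ambiguity $\tau$ of the resulting morphism $\sigma$, I would first aim to prove a \emph{forcing lemma}: every variable $z\notin\{x_1,x_2\}$ carries a letter that occurs in $\sigma(\alpha)$ only at the positions of $z$, and — using that such a $z$ has fixed, distinct neighbours recorded in the sets $L_z$ and $R_z$ — one wants to conclude $\tau(z)=\sigma(z)$. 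If the forcing lemma applies, the remaining freedom of $\tau$ lives entirely inside the maximal factors of $\alpha$ built only from $x_1$ and $x_2$ (the \emph{blocks}): writing $\tau(x_1)=a^{p}$ and $\tau(x_2)=a^{q}$, each block with $s$ occurrences of $x_1$ and $t$ of $x_2$ imposes $sp+tq=s+t$, and two blocks whose vectors $(s,t)$ are not proportional force $(p,q)=(1,1)$, i.e.\ $\tau=\sigma$. Thus, under forcing, a merge is safe exactly when the block vectors of $x_1,x_2$ are not all proportional, a condition that a counting argument should meet for some pair once $|\var(\alpha)|\geq 4$.

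The difficulty — and the reason the statement is only conjectured — is that the forcing lemma can genuinely fail: a competitor $\tau$ may \emph{erase} a unique-letter variable and let a neighbouring block absorb its letter, and, because the merge has destroyed information, such an ambiguity need not correspond to any nontrivial morphism fixing $\alpha$. Already for the introductory pattern $1\cdot 2\cdot 3\cdot 1\cdot 3\cdot 2$, every binary $1$-uniform morphism is ambiguous — for some colourings through such erasing competitors, for others through proportional block vectors — which both shows that the bound $|\var(\alpha)|\geq 4$ is necessary and illustrates that a merge must be chosen so as to simultaneously (i) \emph{anchor} the neighbours of $x_1$ and $x_2$ rigidly enough to exclude erasing competitors and (ii) keep the block structure non-degenerate. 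Establishing that a primitive pattern with at least four variables always admits such a choice is the main obstacle; I expect it to require the structural characterisation of morphically primitive words, used to locate two variables with well-separated occurrences and distinct anchoring neighbours.

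If a single merge cannot be made safe for every primitive pattern, the fallback is to permit a smaller target alphabet and to impose rigidity globally, in the spirit of Theorems~\ref{thm:Thue}, \ref{thm:noncross} and~\ref{thm:shortest_succinct_binary}: one would recolour the variables along a square-free (or otherwise square-avoiding) backbone so that the only squares in $\sigma(\alpha)$ arise from variables occurring in immediate repetitions, and then argue that these controlled squares cannot support an ambiguity. The crux in either route is the same, namely ruling out erasing competitors uniformly over all primitive patterns; a complete proof would presumably combine the block-proportion argument for non-clustered variables with a square-freeness argument for the clustered ones.
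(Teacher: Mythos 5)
The statement you are proving is stated in the paper only as Conjecture~\ref{conj:variable}; the authors explicitly leave its overall correctness open and establish it merely for restricted classes of patterns (Theorems~\ref{thm:pair_not_covered_mn}, \ref{thm:shortest_succinct}, \ref{thm:shortest_as_factor} and~\ref{thm:Pi_DB}). Your handling of the \emph{only if} direction is correct and coincides with the paper's: a 1-uniform morphism is nonerasing, so Theorem~\ref{thm:prolix_ambig} makes every such morphism ambiguous with respect to a fixed point, for any target alphabet. So that half is fine.

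For the \emph{if} direction, however, what you give is a programme rather than a proof, and you say so yourself; the gap you name is exactly the one that keeps the statement a conjecture. Concretely, your ``forcing lemma'' (every variable outside the merged pair keeps its 1-letter image under any competitor $\tau$) is false in general: for $\alpha_2 = 1 \cdot 2 \cdot 3 \cdot 3 \cdot 4 \cdot 4 \cdot 1 \cdot 2 \cdot 3 \cdot 3 \cdot 4 \cdot 4 \cdot 2$ the merge $\sigma_{2,4}$ admits the competitor $\tau(1) = abccb$, $\tau(2) = b$, $\tau(3) = \tau(4) = \varepsilon$, in which the variable $1$ --- not in the merged pair --- absorbs a long factor and $3$ is erased, even though $\sigma_{2,4}(\alpha_2)$ is not a fixed point. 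Hence the residual freedom of $\tau$ is not confined to powers of the merged letter inside $\{x_1,x_2\}$-blocks, and your linear system $sp+tq=s+t$ does not capture all competitors. Your counting argument for the existence of a ``safe'' pair is likewise unproved; the paper's Lemma~\ref{lem:greater_6} carries out such a count, but only for patterns in which every variable occurs exactly twice, and even then the conclusion (Theorem~\ref{thm:shortest_succinct}) needs the full case analysis of Theorem~\ref{thm:pair_not_covered_mn} to exclude erasing competitors. Your fallback via square-free or de Bruijn recolourings matches the paper's Theorems~\ref{thm:Thue} and~\ref{thm:Pi_DB}, but these again cover only patterns that happen to map onto such words. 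In short: the proposal does not establish the statement, but the missing step is precisely the open problem; no complete proof exists in the paper either.
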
\noindent
This conjecture would be trivially true if we allowed $\Sigma$ to satisfy $|\Sigma| \geq |\var(\alpha)|$. That explains why we exclusively study the case where the number of letters in the target alphabet is smaller than the number of variables in the pattern. From Theorem~\ref{thm:Thue}, it directly follows that an analogous conjecture would not be true if we considered fixed binary target alphabets (as is done in Section~\ref{sec:fixed}), since none of the patterns $\alpha_m$ is a fixed point of a nontrivial morphism -- this can be easily verified using tools discussed by Reidenbach and Schneider~\cite{rei:mor2} and Holub~\cite{hol:pol}. Hence, characteristic criteria must necessarily look different in such a context. It can also be effortlessly understood that Conjecture~\ref{conj:variable} would be incorrect if we dropped the condition that $\alpha$ needs to contain at least $4$ distinct variables, since not only $\sigma_0$, but all 1-uniform morphisms $\sigma: \mathbb{N}^* \to \Sigma^*$ with $|\Sigma| \leq 2$ are ambiguous with respect to our example pattern $\alpha_0 = 1 \cdot 2 \cdot 3 \cdot 1 \cdot 3 \cdot 2$ discussed in Section~\ref{sec:intro}.\par
Technically, many of our subsequent technical considerations are based on the following generic morphisms:
\begin{definition}\label{def:sigij}
Let $\Sigma$ be an infinite alphabet, and let $\sigma: \mathbb{N}^* \to \Sigma^*$ be a renaming. For any $i,j \in \mathbb{N}$ with $i \neq j$ and for every $x \in \mathbb{N}$, let the morphism $\sigma_{i,j}$ be given by
\begin{displaymath}
\sigma_{i,j}(x) :=
\begin{cases}
\sigma(i), & \text{if }\, x = j \, ,\\
\sigma(x), & \text{if }\, x \neq j \, .
\end{cases}
\end{displaymath}
\end{definition}\noindent
Thus, $\sigma_{i,j}$ maps exactly two variables to the same image, and therefore, for any pattern $\alpha$ with at least two different variables, $\sigma_{i,j}(\alpha)$ is a word over $|\var(\alpha)| - 1$ distinct letters. Using this definition, we can now state a more specific version of Conjecture~\ref{conj:variable}:
\begin{conjecture}\label{conj:sigij}
Let $\alpha$ be a pattern with $|\var(\alpha)| \geq 4$. There exist $i,j \in \var(\alpha)$, $i \neq j$, such that $\sigma_{i,j}$ is unambiguous with respect to $\alpha$ if and only if $\alpha$ is not a fixed point of a nontrivial morphism.
\end{conjecture}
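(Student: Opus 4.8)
The plan is to prove the two directions separately, exploiting the fact that $\sigma_{i,j}$ is a very specific type of 1-uniform morphism: it is injective on all variables except that it collapses $j$ onto $i$. The \emph{only if} direction is immediate from Theorem~\ref{thm:prolix_ambig}: if $\alpha$ is a fixed point of a nontrivial morphism, then every nonerasing morphism---and in particular every 1-uniform morphism, hence every $\sigma_{i,j}$---is ambiguous with respect to $\alpha$. So all the work lies in the \emph{if} direction, where I must show that for any $\alpha$ with at least four distinct variables that is \emph{not} a fixed point, some pair $i,j$ makes $\sigma_{i,j}$ unambiguous. I would assume throughout that $\alpha$ is in canonical form, so that $\var(\alpha) = \{1,2,\ldots,n\}$ with $n \geq 4$.

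First I would set up the ambiguity condition in a usable form. Suppose $\tau \neq \sigma_{i,j}$ satisfies $\tau(\alpha) = \sigma_{i,j}(\alpha)$. Since $\sigma_{i,j}$ is 1-uniform, $|\sigma_{i,j}(\alpha)| = |\alpha|$, and because $\tau$ produces the same image, the total length contributed by $\tau$ equals $|\alpha|$; the key structural observation is that $\sigma_{i,j}(\alpha)$ uses exactly $n-1$ distinct letters (the images of $\{1,\ldots,n\}\setminus\{j\}$ under the renaming $\sigma$), each letter appearing in positions dictated by the occurrences of the corresponding variable(s) in $\alpha$. Any competing $\tau$ must realign these blocks, so I would analyse ambiguity via the neighbourhood sets $L_x, R_x$ introduced in Section~\ref{sec:def}: intuitively, a variable $x$ can be ``shifted'' by an alternative morphism only if the letters abutting its image-occurrences are compatible, and the collapse of $j$ onto $i$ is precisely what can create or destroy such compatibilities. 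The crux is to translate ``$\alpha$ is not a fixed point'' into a combinatorial guarantee that at least one choice of the collapsing pair $(i,j)$ leaves no room for realignment.

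The heart of the argument, and what I expect to be the main obstacle, is choosing the pair $(i,j)$ well and proving it works simultaneously against \emph{all} possible $\tau$, including erasing ones. Since $\alpha$ is not a fixed point, by the characterisations cited via Reidenbach and Schneider~\cite{rei:mor2} it admits no nontrivial characteristic factorisation; equivalently, every variable has an occurrence that is genuinely ``pinned down'' by its context. I would try to select $i$ and $j$ so that $i$ and $j$ are distinguished by their neighbourhoods---for instance choosing $j$ to be a variable whose left and right neighbour letters, after the collapse, still differ from those of every other variable---thereby forcing $\tau(x)$ to align with $\sigma_{i,j}(x)$ block by block. The delicate case is the interaction between the two collapsed variables $i$ and $j$: their shared image means $\tau$ could in principle split the image of one occurrence of $i$ across the image of an adjacent $j$, and ruling this out is where the condition $|\var(\alpha)|\geq 4$ and the non-fixed-point hypothesis must combine. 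I anticipate needing a case distinction on whether $i$ and $j$ are ever adjacent in $\alpha$, and a careful counting argument showing that any nontrivial realignment would induce exactly the kind of length-preserving substitution that would make $\alpha$ a fixed point, contradicting the hypothesis.

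Finally, I would consolidate the block-alignment analysis into the conclusion that $\tau$ must agree with $\sigma_{i,j}$ on every variable, hence $\tau = \sigma_{i,j}$, contradicting the assumption of a distinct competing morphism. The hardest step is establishing that a \emph{single} pair $(i,j)$ can always be found that defeats every $\tau$; I expect this to require either an explicit canonical choice based on the positions of first occurrences of variables in $\alpha$, or a pigeonhole-style argument over the $\binom{n}{2}$ candidate pairs showing that if every pair failed, one could assemble a nontrivial morphism fixing $\alpha$. Given the authors' remark that Problem~\ref{prob:main} is challenging and that this is only a conjecture, I would not expect a short unconditional proof, and the genuinely difficult part is precisely this uniform existence of a good collapsing pair rather than the per-pair verification.
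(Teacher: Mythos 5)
This statement is a \emph{conjecture} in the paper, not a theorem: the authors explicitly write that they ``have to leave the overall correctness of our conjectures open.'' The only part that is actually proved is the \emph{only if} direction, which, as you correctly observe, follows immediately from Theorem~\ref{thm:prolix_ambig}. That half of your proposal is fine and matches the paper exactly.

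The \emph{if} direction, however, is where your proposal has a genuine gap: it is a research plan, not a proof, and the plan as described does not close the problem. You never actually exhibit a rule for choosing the pair $(i,j)$, nor do you prove that any such rule works against all competing morphisms $\tau$. The final pigeonhole idea --- ``if every pair failed, one could assemble a nontrivial morphism fixing $\alpha$'' --- is precisely the open content of the conjecture restated, not an argument; nothing in the proposal explains how the $\binom{n}{2}$ separate failure witnesses $\tau_{i,j}$ (each a different morphism, possibly erasing, possibly realigning blocks in entirely different ways) could be combined into a single nontrivial morphism $\phi$ with $\phi(\alpha)=\alpha$. The paper's own partial results show why this is delicate: Proposition~\ref{prop:image_fixed_point} shows some pairs fail because $\sigma_{i,j}(\alpha)$ is itself a fixed point, the example $\alpha_2 = 1\cdot2\cdot3\cdot3\cdot4\cdot4\cdot1\cdot2\cdot3\cdot3\cdot4\cdot4\cdot2$ shows $\sigma_{i,j}$ can be ambiguous even when $\sigma_{i,j}(\alpha)$ is \emph{not} a fixed point, and the counting argument you gesture at is carried out in Lemma~\ref{lem:greater_6} only under the strong restriction that every variable occurs exactly twice (where the number of ``bad'' pairs can be bounded by $3n-3 < \binom{n}{2}$). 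What the paper actually delivers is a collection of sufficient conditions (Theorems~\ref{thm:pair_not_covered_mn}, \ref{thm:shortest_succinct}, \ref{thm:shortest_as_factor}, \ref{thm:Pi_DB}) covering special classes of patterns; the general case remains open, and your proposal does not supply the missing idea.
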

As a side note, we consider it worth mentioning that Conjecture~\ref{conj:sigij} shows connections to another conjecture from the literature. In order to state the latter, we define, for any $i \in \mathbb{N}$, the morphism $\delta_i: \mathbb{N}^* \to \mathbb{N}^*$ by $\delta_i(i):= \varepsilon$ and, for every $j \in \mathbb{N} \setminus \{i\}$, $\delta_i(j) := j$.
\begin{conjecture}[Billaud~\cite{bil:apr}, Lev\'e and Richomme~\cite{lev:ona}]\label{conj:Billaud}
Let $\alpha$ be a pattern with $|\var(\alpha)| \geq 3$. If, for every $i \in \var(\alpha)$, $\delta_i(\alpha)$ is a fixed point of a nontrivial morphism, then $\alpha$ is a fixed point of a nontrivial morphism.
\end{conjecture}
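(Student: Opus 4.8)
The plan is to argue by contradiction, recasting fixed points as morphic (im)primitivity via the equivalences of Section~\ref{sec:def} (Reidenbach and Schneider~\cite{rei:mor2}). Suppose that $\alpha$ with $|\var(\alpha)| \geq 3$ is morphically primitive, yet every $\delta_i(\alpha)$ is morphically imprimitive; I aim to derive a contradiction by exhibiting a nontrivial morphism fixing $\alpha$. I would use the operational form of (im)primitivity: a word $w$ is morphically imprimitive precisely when there are morphisms $g,h$ with $h(g(w)) = w$ and $|g(w)| < |w|$, so that $\phi := h \circ g$ is a nontrivial morphism fixing $w$. Thus each $\delta_i(\alpha)$ comes equipped with a witnessing pair $(g_i,h_i)$ and an associated fixing morphism $\phi_i$, and the objective is to assemble from these a single witnessing pair for $\alpha$ itself.

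First I would make the local structure of each $\phi_i$ explicit through the characteristic factorisation of $\delta_i(\alpha)$: $\phi_i$ splits $\var(\alpha) \setminus \{i\}$ into \emph{immortal} variables (those surviving all iterates of $\phi_i$) and \emph{mortal} variables, and $\delta_i(\alpha)$ decomposes into immortal letters separated by mortal blocks that $g_i$ collapses. The crucial step is to re-insert the deleted occurrences of $i$ into this factorisation and to ask whether $\phi_i$, extended to $i$ in the natural way, still fixes $\alpha$. This succeeds whenever the occurrences of $i$ fall inside the mortal blocks of $\delta_i(\alpha)$ (so that $i$ can simply be declared mortal too) or land on immortal boundaries without disturbing a collapse; it can fail only when some occurrence of $i$ separates two pieces of an immortal neighbourhood that the collapse relied on being adjacent. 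Thus for each $i$ there is a dichotomy: either the imprimitivity of $\delta_i(\alpha)$ is \emph{inherited} from $\alpha$ (in which case $\alpha$ is already imprimitive and we are done), or it is \emph{created} by the deletion, meaning that the occurrences of $i$ are exactly what obstruct the collapse in $\alpha$.

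The heart of the argument is then to rule out the remaining case, in which the imprimitivity of $\delta_i(\alpha)$ is created by the deletion for \emph{every} choice of $i$ simultaneously. Here I would look for an extremal or global argument: intuitively, each variable $i$ would have to block a collapse that becomes available the moment $i$ is removed, and these blocking roles, taken over all $i$, should overconstrain $\alpha$. One natural attempt is to pick $i$ realising a suitable minimality (e.g.\ a variable lying in the smallest mortal region of some $\delta_j(\alpha)$, or a variable confined to a single block) and to show that its deletion cannot after all create a genuinely new collapse, forcing $\phi_i$ to extend to a fixing morphism of $\alpha$. This mirrors, in the erasing direction $\delta_i$, the variable-merging reduction $\sigma_{i,j}$ of Conjecture~\ref{conj:sigij}, and a successful reconciliation here would be precisely what links the two conjectures.

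I expect this reconciliation to be the genuine obstacle, and indeed the reason the statement appears here only as a conjecture rather than a theorem. The difficulty is that the mortal/immortal classifications and block boundaries supplied by the various $\phi_i$ need not be mutually compatible: a variable can be mortal under $\phi_i$ yet immortal under $\phi_j$, and the collapse enabled by deleting $i$ may exploit exactly the adjacency that the occurrences of $i$ destroy, in a way that varies unpredictably from one $i$ to the next. Controlling how the removal of a single variable can manufacture imprimitivity that is absent from $\alpha$ — and proving that this cannot happen for all $n \geq 3$ variables at once — resists any short combinatorial argument. A realistic route would be to first settle the base case $|\var(\alpha)| = 3$ by direct analysis, then to treat the special case in which the witnesses $\phi_i$ share a common mortal set, and finally to attempt an induction on $|\var(\alpha)|$ via the deletion of a carefully chosen variable.
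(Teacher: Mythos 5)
This statement is an open conjecture: the paper offers no proof of it, and explicitly records that ``the correctness of Conjecture~3 has not been established yet,'' being known only for certain subclasses of $\mathbb{N}^*$ (Lev\'e and Richomme~\cite{lev:ona}). Your submission is therefore not comparable to a proof in the paper, and more importantly it is not a proof at all: you yourself identify the central step --- showing that the imprimitivity of $\delta_i(\alpha)$ cannot be ``created'' by the deletion for \emph{every} $i$ simultaneously --- as an unresolved obstacle, and the remainder of the final two paragraphs is a research plan rather than an argument. That missing step is precisely the content of the conjecture, so nothing has been proved. To your credit, you correctly diagnose why the statement is hard and why it appears here only as a conjecture; but as a proof attempt the verdict must be that the essential idea is absent.

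One concrete technical slip in the sketch is also worth flagging. In the ``inherited'' branch of your dichotomy you propose to extend $\phi_i$ to $\alpha$ by declaring $i$ mortal, i.e.\ setting $\phi_i(i) := \varepsilon$, whenever the occurrences of $i$ lie inside mortal blocks. This cannot work as stated: if $\phi_i$ erases $i$ and fixes every other variable's contribution as in $\delta_i(\alpha)$, then $\phi_i(\alpha) = \phi_i(\delta_i(\alpha)) = \delta_i(\alpha) \neq \alpha$, since $i$ actually occurs in $\alpha$. For a nontrivial morphism to fix $\alpha$ while erasing $i$, the deleted occurrences of $i$ must be regenerated inside the image of some neighbouring immortal variable, which imposes strong positional constraints (each occurrence of $i$ must sit adjacent to a fixed immortal ``anchor'' in a consistent way). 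This is exactly the kind of delicate bookkeeping that Lev\'e and Richomme carry out for their special cases, and glossing over it makes even the ``easy'' branch of your dichotomy incomplete.
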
\noindent
In general, the correctness of Conjecture~\ref{conj:Billaud} has not been established yet. The problem is intensively studied by Lev\'e and Richomme~\cite{lev:ona}, where it is shown to be correct for certain subclasses of $\mathbb{N}^*$.\par
Due to Theorem~\ref{thm:prolix_ambig}, the \emph{only if} directions of Conjectures~\ref{conj:variable} and~\ref{conj:sigij} hold true immediately. In the remainder of this section, we shall therefore exclusively study those patterns that are not fixed points. Our corresponding results yield large classes of such patterns that have an unambiguous 1-uniform morphism, but we have to leave the overall correctness of our conjectures open.\par
Conjecture~\ref{conj:sigij} suggests that the examination of the existence of unambiguous 1-uniform morphisms for a pattern $\alpha$ may be reduced to finding suitable variables $i$ and $j$ such that $\sigma_{i,j}$ is unambiguous with respect to $\alpha$. In this regard, one particular choice can be ruled out immediately:
\begin{proposition}\label{prop:image_fixed_point}
Let $\alpha$ be a pattern, and let $i,j \in \var(\alpha)$, $i \neq j$. If $\sigma_{i,j}(\alpha)$ is a fixed point of a nontrivial morphism, then $\sigma_{i,j}$ is ambiguous with respect to $\alpha$.
\end{proposition}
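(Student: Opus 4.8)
The plan is to exploit the fixed-point hypothesis directly through a composition argument. By assumption, $\sigma_{i,j}(\alpha)$ is a fixed point of a nontrivial morphism, so there exists a morphism $\phi$ with $\phi(\sigma_{i,j}(\alpha)) = \sigma_{i,j}(\alpha)$ and a letter $c$ occurring in $\sigma_{i,j}(\alpha)$ such that $\phi(c) \neq c$. Without loss of generality I would take $\phi$ to fix every letter of $\Sigma$ that does not occur in $\sigma_{i,j}(\alpha)$, so that $\phi$ is a well-defined morphism $\Sigma^* \to \Sigma^*$; this extension does not affect the value $\phi(\sigma_{i,j}(\alpha))$. The idea is then to set $\tau := \phi \circ \sigma_{i,j}$, which is a morphism $\mathbb{N}^* \to \Sigma^*$, and to verify that it witnesses the ambiguity of $\sigma_{i,j}$ with respect to $\alpha$.

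First I would check the image condition: since $\tau(\alpha) = \phi(\sigma_{i,j}(\alpha))$ and $\phi$ fixes $\sigma_{i,j}(\alpha)$, we immediately obtain $\tau(\alpha) = \sigma_{i,j}(\alpha)$, as required by Problem~\ref{prob:main}. It then remains to exhibit a variable on which $\tau$ and $\sigma_{i,j}$ disagree.

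For the disagreement, I would use that $\sigma$ is a renaming, hence injective and $1$-uniform, so the set of letters occurring in $\sigma_{i,j}(\alpha)$ is exactly $\{\sigma(x) \mid x \in \var(\alpha),\ x \neq j\}$ (recall that $j$ contributes the letter $\sigma(i)$, which is already contributed by $i$). Consequently the witness letter $c$ equals $\sigma(y)$ for some $y \in \var(\alpha)$ with $y \neq j$; for this $y$ we have $\sigma_{i,j}(y) = \sigma(y) = c$, whereas $\tau(y) = \phi(\sigma_{i,j}(y)) = \phi(c) \neq c$. Thus $\tau(y) \neq \sigma_{i,j}(y)$ for a variable $y$ occurring in $\alpha$, and together with $\tau(\alpha) = \sigma_{i,j}(\alpha)$ this shows that $\sigma_{i,j}$ is ambiguous with respect to $\alpha$.

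I do not expect a genuine obstacle here; the argument is essentially the single composition $\tau = \phi \circ \sigma_{i,j}$. The only points requiring care are bookkeeping: extending $\phi$ to all of $\Sigma$ so that the composition is legitimate, and confirming that the nontriviality witness $c$ pulls back to a variable distinct from $j$ (which is automatic, since every letter of $\sigma_{i,j}(\alpha)$, including $\sigma(i)$, arises as $\sigma(x)$ for some $x \neq j$). Note finally that $\phi$ may be erasing, but this is harmless, because the notion of unambiguity in Problem~\ref{prob:main} permits $\tau$ to be an arbitrary --- possibly erasing --- morphism.
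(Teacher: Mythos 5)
Your proof is correct: the composition $\tau := \phi \circ \sigma_{i,j}$ with $\phi$ the nontrivial morphism fixing $\sigma_{i,j}(\alpha)$, together with the observation that every letter of $\sigma_{i,j}(\alpha)$ is $\sigma(y)$ for some $y \in \var(\alpha)$ with $y \neq j$, establishes exactly what is needed. The paper omits the proof of this proposition, but the worked example it gives immediately afterwards (the fixed point $abcbabcb$ of $\phi(a) := abcb$, $\phi(b) := \phi(c) := \varepsilon$) indicates precisely this composition argument, so your approach coincides with the intended one.
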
\noindent
For example, if we consider the pattern $\alpha_1 := 1 \cdot 2 \cdot 3 \cdot 4 \cdot 1 \cdot 4 \cdot 3 \cdot 2$ (which is not a fixed point) and define $\Sigma := \{a, b, c \}$, then $\sigma_{2,4}(\alpha_1)$ equals $abcbabcb$ (or any renaming thereof), which is a fixed point of the morphism $\phi$ given by $\phi(a) := abcb$ and $\phi(b) := \phi(c) := \varepsilon$. Thus, $\sigma_{2,4}$ is ambiguous with respect to $\alpha_1$. However, Proposition~\ref{prop:image_fixed_point} does not provide a characteristic condition on the ambiguity of $\sigma_{i,j}$, since $\sigma_{2,3}(\alpha_1) = abbcacbb$ is not a fixed point, but still $\sigma_{2,3}$ is ambiguous with respect to $\alpha_1$. Furthermore, while the ambiguity of $\sigma_{2,3}$ results from the fact that $\alpha_1$ contains the factors $2 \cdot 3$ and $3 \cdot 2$, and is therefore easy to comprehend, there are more difficult examples of morphisms $\sigma_{i,j}$ that are ambiguous although they do not lead to a morphic image that is a fixed point. This is illustrated by the example $\alpha_2 :=  1 \cdot 2 \cdot 3 \cdot 3 \cdot 4 \cdot 4 \cdot 1 \cdot 2 \cdot 3 \cdot 3 \cdot 4 \cdot 4 \cdot 2$. Here,  $\sigma_{2,4}(\alpha_1) = abccbbabccbbb$ again is not a fixed point, but $\sigma_{2,4}$ is nevertheless ambiguous with respect to $\alpha_2$, since the morphism $\tau$ given by $\tau(1) := abccb$, $\tau(2) := b$ and $\tau(3) := \tau(4) := \varepsilon$ satisfies $\tau(\alpha_2) = \sigma_{2,4}(\alpha_2)$. We therefore conclude that it seems not to be a straightforward task to find amendments that could turn Proposition~\ref{prop:image_fixed_point} into a characteristic condition.\par
We now show that Conjecture~\ref{conj:sigij} is correct for several types of patterns. To this end, we need the following simple sufficient condition on a pattern being a fixed point:
\begin{lemma}\label{lem:fixed_point}
Let $\alpha\in\mathbb{N}^{+}$. If there exists a variable $i\in\var(\alpha)$ such that
\begin{enumerate}
\item\label{lem-fixedpoint-condition1} $\varepsilon\not\in L_{i}$ and, for every $k\in L_{i}$, $R_{k}=\{i\}$, or
\item\label{lem-fixedpoint-condition2} $\varepsilon\not\in R_{i}$ and, for every $k\in R_{i}$, $L_{k}=\{i\}$,
\end{enumerate}
\noindent
then $\alpha$ is a fixed point of a nontrivial morphism.
\end{lemma}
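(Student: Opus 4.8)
My plan is to establish the claim under hypothesis~\ref{lem-fixedpoint-condition1}; the case of hypothesis~\ref{lem-fixedpoint-condition2} then follows by a symmetric argument, reading $\alpha$ from right to left (which interchanges the roles of the neighbourhood sets $L_x$ and $R_x$). So I assume there is a variable $i$ with $\varepsilon \notin L_i$ and $R_k = \{i\}$ for every $k \in L_i$. The core idea is to exhibit an explicit nontrivial morphism $\phi$ that erases $i$ and compensates for each such deletion by appending $i$ to the left neighbour responsible for it. Concretely, I would set $\phi(i) := \varepsilon$, $\phi(k) := k\cdot i$ for every $k \in L_i$, and $\phi(x) := x$ for every remaining variable $x$.

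Before verifying $\phi(\alpha) = \alpha$, I first need $\phi$ to be well defined, which amounts to confirming $i \notin L_i$: if $i$ were a left neighbour of itself, then the hypothesis would force $R_i = \{i\}$, so every occurrence of $i$ would be immediately followed by a further $i$ and could never be a suffix of $\alpha$, contradicting the finiteness of the nonempty word $\alpha$.

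The heart of the argument is a bijective pairing between the occurrences of $i$ and the occurrences of symbols from $L_i$. Since $\varepsilon \notin L_i$, the variable $i$ never begins $\alpha$, so each occurrence of $i$ is immediately preceded by some $k \in L_i$. Conversely, since $R_k = \{i\}$ (and hence $\varepsilon \notin R_k$) for each such $k$, every occurrence of a symbol from $L_i$ is immediately followed by $i$ and is never a suffix of $\alpha$. Together with $i \notin L_i$, this shows that the position immediately preceding an $i$ and the position immediately following an $L_i$-symbol define mutually inverse maps, so $\alpha$ partitions into consecutive blocks, each of which is either a factor $k\cdot i$ with $k \in L_i$ or a single variable $x \notin L_i \cup \{i\}$.

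It then remains to evaluate $\phi$ block by block: a block $k\cdot i$ is mapped to $\phi(k)\phi(i) = k\cdot i \cdot \varepsilon = k\cdot i$, and a block $x$ is mapped to itself. Concatenating over all blocks yields $\phi(\alpha) = \alpha$; moreover $\phi(i) = \varepsilon \neq i$ with $i \in \var(\alpha)$, so $\phi$ is nontrivial, and $\alpha$ is a fixed point as required. I expect the only real (though modest) obstacle to be the careful justification that the pairing is exhaustive and consistent, so that the deletions of $i$ and the appended copies of $i$ cancel exactly and no occurrence is counted twice; beyond this bookkeeping I anticipate no deeper combinatorial difficulty.
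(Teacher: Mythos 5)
Your proof is correct and complete: the morphism $\phi$ with $\phi(i):=\varepsilon$, $\phi(k):=k\cdot i$ for $k\in L_i$, and $\phi(x):=x$ otherwise is exactly the kind of nontrivial erasing morphism the lemma calls for, and your verification that $i\notin L_i$ and that $\alpha$ decomposes into blocks $k\cdot i$ and singletons covers the only points where care is needed. The paper omits its proof of this lemma, so there is nothing to compare against, but your construction is the natural one and the symmetric reduction of condition~\ref{lem-fixedpoint-condition2} to condition~\ref{lem-fixedpoint-condition1} by reversal is unproblematic.
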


Using this lemma, we can now establish a class of patterns for which Conjecture~\ref{conj:sigij} holds true. All variables in these patterns have the same number of occurrences and satisfy some additional conditions:
\begin{theorem}\label{thm:pair_not_covered_mn}
Let $m \in \mathbb{N}$, $m\geq2$. Let $\alpha \in \mathbb{N}^+$ be a pattern that is not a fixed point of a nontrivial morphism and satisfies, for every $x \in \var(\alpha)$, $|\alpha|_x = m$. If there are $i,j \in \var(\alpha)$, $i \neq j$, such that
\begin{itemize}
\item there is no $k\in\var(\alpha)$ with $\{i,j\}\subseteq L_{k}$ or $\{i,j\}\subseteq R_{k}$, and
\item $\alpha\neq \alpha_{1}\cdot \mathbf{i\cdot j}\cdot\alpha_{2}\cdot \mathbf{j\cdot i}\cdot\alpha_{3}$, $\alpha_{1},\alpha_{2},\alpha_{3}\in\mathbb{N}^*$,
\end{itemize}
then $\sigma_{i,j}$ is unambiguous with respect to $\alpha$.
\end{theorem}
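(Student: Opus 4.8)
The plan is to argue by contradiction: assume $\sigma_{i,j}$ is ambiguous with respect to $\alpha$, so there is a morphism $\tau$ with $\tau(\alpha) = \sigma_{i,j}(\alpha)$ and $\tau(x) \neq \sigma_{i,j}(x)$ for some $x \in \var(\alpha)$, and derive a contradiction with one of the two structural hypotheses, with Lemma~\ref{lem:fixed_point}, or with the assumption that $\alpha$ is not a fixed point. The first—and most important—observation is that the hypothesis $|\alpha|_x = m$ for every variable reduces the entire problem to erasing morphisms. Indeed, since $\sigma_{i,j}$ is $1$-uniform, $|\sigma_{i,j}(\alpha)| = |\alpha| = m\,|\var(\alpha)|$, whereas $|\tau(\alpha)| = \sum_{x \in \var(\alpha)} |\alpha|_x\,|\tau(x)| = m \sum_{x} |\tau(x)|$; equating lengths yields $\sum_{x} |\tau(x)| = |\var(\alpha)|$. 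If $\tau$ were nonerasing, then every $|\tau(x)| \geq 1$ and, as there are exactly $|\var(\alpha)|$ variables, each $|\tau(x)|$ would be forced to equal $1$; but a $1$-uniform morphism mapping $\alpha$ to $\sigma_{i,j}(\alpha)$ must agree with $\sigma_{i,j}$ letter by letter, i.e. $\tau = \sigma_{i,j}$ on $\var(\alpha)$, contradicting ambiguity. Hence \emph{any} witnessing $\tau$ must erase at least one variable.

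For the remaining analysis I would normalise $\sigma$ to the identity (legitimate, since renaming the target alphabet preserves (un)ambiguity), so that $\sigma_{i,j}(\alpha) = \beta$, where $\beta$ is $\alpha$ with every $j$ rewritten as $i$. Comparing, for each letter $c$, its number of occurrences in $\tau(\alpha)$ and in $\beta$ gives a rigid bookkeeping constraint: the variable $j$ occurs in no image $\tau(x)$, the variable $i$ occurs exactly twice among all images $\tau(x)$, and every other variable occurs exactly once. Thus $\tau$ distributes $|\var(\alpha)|$ letter-tokens—two copies of $i$ and one copy of each remaining variable except $j$—among the images, erased variables receiving none. This is the engine of the argument: it shows that $\tau$ is \emph{almost} a renaming, the only slack being the doubled letter $i$ together with the erased variables.

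Next I would exploit that, because $\sigma_{i,j}$ is $1$-uniform, $\beta$ is index-aligned with $\alpha$. Writing $\alpha = x_1 \cdots x_n$ with $n := |\alpha|$ and $x_k \in \var(\alpha)$, and setting $\pi_k := |\tau(x_1 \cdots x_k)|$, the shifts $s_k := \pi_k - k$ satisfy $s_0 = s_n = 0$, and each block $\tau(x_k)$ coincides with the corresponding factor of $\beta$ read at the shifted location $\pi_{k-1}+1$. I would locate the first index at which $\tau$'s parse desynchronises from that of $\sigma_{i,j}$ (the first block of length $\neq 1$), observe that the responsible variable occurs there for the first time, and trace the letter emitted at the seam. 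The token rigidity pins the admissible offsets, and—crucially using that every variable recurs exactly $m$ times, so that a local desynchronisation must recur consistently at \emph{all} its occurrences—I would show that the parse can resynchronise only if $\alpha$ displays one of three configurations: an occurrence of the factor $i \cdot j$ preceding an occurrence of $j \cdot i$, contradicting the second hypothesis; a variable $k$ with $\{i,j\} \subseteq L_{k}$ or $\{i,j\} \subseteq R_{k}$, contradicting the first hypothesis; or a neighbourhood pattern meeting the premise of Lemma~\ref{lem:fixed_point}, which would make $\alpha$ a fixed point and contradict the standing assumption. Ruling out all three yields the contradiction, so no erasing $\tau$ exists and $\sigma_{i,j}$ is unambiguous.

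The hard part is this last step: controlling how the single doubled letter $i$ and the erased variables propagate through $\beta$. The difficulty is genuinely global, since a desynchronisation introduced at one occurrence of a variable is forced, by $|\alpha|_x = m$, to be mirrored at its other $m-1$ occurrences, and one must show that the only globally consistent resolutions are the three forbidden configurations rather than some benign rearrangement. I expect the bulk of the work to lie in the bookkeeping that converts the persistence of a nontrivial shift into an explicit statement about the neighbourhood sets $L_{k}$ and $R_{k}$ and about the relative order of the factors $i \cdot j$ and $j \cdot i$.
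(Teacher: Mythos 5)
Your preliminary bookkeeping is correct and useful: the length count $\sum_x |\tau(x)| = |\var(\alpha)|$ showing that any witnessing $\tau$ must erase a variable, and the per-letter token count (two tokens of $\sigma(i)$, one of every other letter, none of $\sigma(j)$) are both valid consequences of $|\alpha|_x = m$ for all $x$. The overall shape of your argument -- contradiction resolved into a violation of one of the two hypotheses, of Lemma~\ref{lem:fixed_point}, or of the assumption that $\alpha$ is not a fixed point -- matches the target. But the proof has a genuine gap at exactly the point you flag as ``the hard part'': the claim that the parse can resynchronise only via one of the three forbidden configurations is announced, not derived. That claim \emph{is} the theorem; without the case analysis that establishes it, nothing has been proved. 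Your shift-tracking framing also needs care: what recurs at the $m$ occurrences of a variable $k$ is the block $\tau(k)$ itself, not the cumulative shift $s$ (which depends on everything between occurrences), so ``a desynchronisation is mirrored at the other occurrences'' is not literally true as stated.

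The paper's proof uses a more local and more tractable engine that you may want to adopt. Since some $|\tau(k)|\geq 2$ (which follows from your own length count), fix a length-$2$ factor $uv$ of $\tau(k)$ with $u,v\in\Sigma$. Because $k$ occurs $m$ times, $uv$ occurs at least $m$ times in $\sigma_{i,j}(\alpha)=\tau(\alpha)$, and since $\sigma_{i,j}$ is $1$-uniform each such occurrence is the image of a two-variable factor $xy\sqsubseteq\alpha$ with $\sigma_{i,j}(x)=u$, $\sigma_{i,j}(y)=v$. As $\sigma_{i,j}$ is injective except for identifying $i$ with $j$, the possible preimages $xy$ are pinned down by whether $u$ and/or $v$ equals $\sigma(i)$; combined with $|\alpha|_x=m$ for all $x$, the $m$ forced occurrences either yield $R_{x_1}=\{x_2\}$ and $L_{x_2}=\{x_1\}$ (so $\alpha$ is a fixed point by Lemma~\ref{lem:fixed_point}), or yield $\{i,j\}\subseteq L_k$ or $\{i,j\}\subseteq R_k$ for some $k$, or force both $i\cdot j$ and $j\cdot i$ as factors -- each contradicting a hypothesis. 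This replaces your global synchronisation analysis with a finite, checkable case split on the preimage of a single repeated two-letter factor; one residual subcase (where the $m$ occurrences of $uv=uu$ split between $i\cdot i$ and $j\cdot j$) requires an extra step showing $k\neq i,j$ and passing to another variable of image length at least $2$. Until you supply an argument of comparable completeness for your shift-based route, the proposal does not constitute a proof.
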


\begin{proof}
Assume to the contrary that $\sigma_{i,j}$ is ambiguous. So, there exists a morphism $\tau: \mathbb{N}^+\to \Sigma^{*}$ satisfying $\tau(\alpha)=\sigma_{i,j}(\alpha)$ and, for some $x\in\var(\alpha)$, $\tau(x)\neq\sigma_{i,j}(x)$. Since $\sigma_{i,j}$ is a 1-uniform morphism, there exists a $k\in\var(\alpha)$ with $|\tau(k)|\geq 2$. Let $uv\sqsubseteq\tau(k)$, $u,v\in\Sigma$. Due to the fact that $k$ occurs $m$ times in $\alpha$, $\sigma_{i,j}(\alpha)=\tau(\alpha)=w_{1}\cdot uv\cdot w_{2}\cdot uv \cdot [\ldots]\cdot w_{m}\cdot uv\cdot w_{m+1}$ with, for every $q$, $1\leq q\leq m+1$, $w_{q}\in\Sigma^{*}$. We now consider the following cases:
\begin{itemize}
\item $\sigma_{i,j}(i)\neq u$ and $\sigma_{i,j}(i)\neq v$. This implies that there exist the variables $x_{1},x_{2}\in\var(\alpha)$, $x_{1},x_{2}\neq i$ and $x_{1},x_{2}\neq j$, such that $\alpha=\alpha_{1}\cdot x_{1}x_{2}\cdot\alpha_{2}\cdot x_{1}x_{2}\cdot[\ldots]\cdot\alpha_{m}\cdot x_{1}x_{2}\cdot\alpha_{m+1}$, for every $q$, $1\leq q\leq m+1$, $\alpha_{q}\in\mathbb{N}^*$, and $\sigma_{i,j}(x_{1})=u$ and $\sigma_{i,j}(x_{2})=v$. Due to $|\alpha|_{x_{1}}=|\alpha|_{x_{2}}=m$, the variables $x_1, x_2$ satisfy, for every $q$ with $1\leq q\leq m+1$, $x_{1},x_{2}\not\sqsubseteq\alpha_{q}$. This implies that $R_{x_{1}}=\{x_{2}\}$ and $L_{x_{2}}=\{x_{1}\}$. Then, according to Lemma~\ref{lem:fixed_point}, $\alpha$ is a fixed point of a nontrivial morphism, which is a contradiction to the assumption of the theorem.

\item $\sigma_{i,j}(i)= \sigma_{i,j}(j)= u$, and $u\neq v$. So, we assume that $\alpha=\alpha_{1}\cdot x_{1}x'\cdot\alpha_{2}\cdot x_{2}x'\cdot[\ldots]\cdot\alpha_{m}\cdot x_{m}x'\cdot\alpha_{m+1}$ with, $x'\in\var(\alpha)$ and, for every $q$, $1\leq q\leq m+1$, $x_{q}\in\var(\alpha)$, $\alpha_{q}\in\mathbb{N}^*$, and $\sigma_{i,j}(x_{q})=u$ and $\sigma_{i,j}(x')=v$. Additionally, since $\sigma_{i,j}(x')=v$ and $u\neq v$, we can conclude that $x'\neq i$ and $x'\neq j$. We now consider the following cases:

\begin{enumerate}
\item For every $q$, $1\leq q\leq m$, $x_{q}=i$. This implies, using the same reasoning as above, that $\alpha$ is a fixed point of a nontrivial morphism which is a contradiction.
\item There exists $q,q'$, $1\leq q,q'\leq m$ and $q\neq q'$, such that $x_{q}=i$ and $x_{q'}=j$. This means that $\{i,j\}\subseteq L_{x_{2}}$, which contradicts the first condition of the theorem.
\end{enumerate}

\item $\sigma_{i,j}(i)= v$, and $u\neq v$. The reasoning is analogous to that in the previous case.

\item $\sigma_{i,j}(i)=\sigma_{i,j}(j)=u$ and $v=u$. Hence, we may assume that $\alpha=\alpha_{1}\cdot x_{1}x'_{1}\cdot\alpha_{2}\cdot x_{2}x'_{2}\cdot[\ldots]\cdot\alpha_{m}\cdot x_{m}x'_{m}\cdot\alpha_{m+1}$ with, for every $q$, $1\leq q\leq m+1$, $\alpha_{q}\in\mathbb{N}^*$, $x_{q},x'_{q}\in\var(\alpha)$ and $\sigma_{i,j}(x_{q})=\sigma_{i,j}(x'_{q})=u$. Due to the conditions of the theorem, the factors $i\cdot i\cdot j$, $i\cdot j\cdot j$, $j\cdot i\cdot i$ and $j\cdot j\cdot i$ cannot be factors of $\alpha$. Moreover, it must be noticed that $u\cdot u\cdot u\not\sqsubseteq\tau(k)$; otherwise, since $\tau(\alpha)=\sigma_{i,j}(\alpha)$, then $|\alpha|_{i}>m$ or $\alpha_{j}>m$. This implies that $i\cdot j\cdot i$ and $j\cdot i\cdot j$ are not factors of $\alpha$. We now consider the following cases:

\begin{enumerate}
\item For every $q$, $1\leq q\leq m$, $x_{q}=i$ and $x'_{q}=j$. As a result, $R_{i}=\{j\}$ and $L_{j}=\{i\}$. According to Lemma~\ref{lem:fixed_point}, $\alpha$ is a fixed point of a nontrivial morphism.

\item For every $q$, $1\leq q\leq m$, $x_{q}=j$ and $x'_{q}=i$. Thus, $R_{j}=\{i\}$ and $L_{i}=\{j\}$, which, due to Lemma~\ref{lem:fixed_point}, again implies that $\alpha$ is a fixed point of a nontrivial morphism.

\item There exists a $q,q'$, $1\leq q,q'\leq m$ and $q\neq q'$, such that $x_{q}\cdot x'_{q}=i\cdot j$ and $x_{q'}\cdot x'_{q'}=j\cdot i$. This case contradicts the second condition of the theorem.

\item There exists a $q,q'$, $1\leq q,q'\leq m$ and $q\neq q'$, such that $x_{q}\cdot x'_{q}=i\cdot j$ and, $x_{q'}\cdot x'_{q'}=i\cdot i$ or $x_{q'}\cdot x'_{q'}=j\cdot j$. This means that $\{i,j\}\subseteq R_{i}$ or $\{i,j\}\subseteq L_{j}$, which is a contradiction to the first condition of the theorem.

\item There exists a $q,q'$, $1\leq q,q'\leq m$ and $q\neq q'$, such that $x_{q}\cdot x'_{q}=j\cdot i$ and, $x_{q'}\cdot x'_{q'}=i\cdot i$ or $x_{q'}\cdot x'_{q'}=j\cdot j$. This implies that $\{i,j\}\subseteq L_{i}$ or $\{i,j\}\subseteq R_{j}$, which contradicts the first condition of the theorem.

\item There exist $q,q'$, $1\leq q,q'\leq m$, $q'\neq q$, such that $x_{q}\cdot x'_{q}=i\cdot i$ and $x_{q'}\cdot x'_{q'}=j\cdot j$. Since $uu\sqsubseteq\tau(k)$ and due to the conditions of the theorem, it follows from $\tau(\alpha)=\sigma_{i,j}(\alpha)$ that $k\neq i$ and $k\neq j$. In other words, $\tau(i)\neq uu$ and $\tau(j)\neq uu$; otherwise, $|\tau(\alpha)|_{u}>|\sigma_{i,j}(\alpha)|_{u}$. Moreover, it must be noticed that if $\sigma_{i,j}(k)\sqsubseteq\tau(k)$, then this implies that there exists $x\in\var(\alpha)\setminus\{i,j\}$, with $\{i,j\}\subseteq L_{x}$ or $\{i,j\}\subseteq R_{x}$, which is a contradiction. Thus, $\sigma_{i,j}(k)\not\sqsubseteq\tau(k)$. Since $\tau(\alpha)=\sigma_{i,j}(\alpha)$, there must be a $k'\in\var(\alpha)$, $k'\neq k,i,j$, such that $\sigma_{i,j}(k)\sqsubseteq\tau(k')$, which means that $|\tau(k')|\geq2$, or we can extend the reasoning to other variables. Consequently, since $\tau(\alpha)=\sigma(\alpha)$, this discussion implies the existence of a $k''\in\var(\alpha)$, $k''\neq k,i,j$, such that $|\tau(k'')|\geq 2$, which, according to the above cases, leads to a contradiction.
\end{enumerate}
\end{itemize}
\end{proof}\noindent
We wish to point out that Theorem~\ref{thm:pair_not_covered_mn} does not only demonstrate the correctness of Conjecture~\ref{conj:sigij} for the given class of patterns, but additionally provides an efficient way of finding an unambiguous morphism $\sigma_{i,j}$. For example, we can immediately conclude from it that $\sigma_{1,4}$ is unambiguous with respect to our above example pattern $\alpha_1$. Furthermore, the theorem also holds for patterns with less than four different variables.\par
We now consider those patterns that are not a fixed point and, moreover, contain all of their variables exactly twice (note that some of these ``shortest'' patterns that are not fixed points are also studied in Theorem~\ref{thm:shortest_succinct_binary}). We wish to demonstrate that Theorem~\ref{thm:pair_not_covered_mn} implies the existence of an unambiguous $\sigma_{i,j}$ for \emph{every} such pattern. This insight is based on the following lemma:
\begin{lemma}\label{lem:greater_6}
Let $\alpha \in \mathbb{N}^+$ be a pattern with $|\var(\alpha)| > 6$ and, for every $x \in \var(\alpha)$, $|\alpha|_x = 2$. Then there exist $i,j\in\var(\alpha)$, $i\neq j$, such that
\begin{itemize}
\item there is no $k \in \var(\alpha)$ with $\{i,j\} \subseteq L_k$ or $\{i,j\} \subseteq R_k$, and
\item $\alpha \neq \alpha_1 \cdot \mathbf{i \cdot j} \cdot \alpha_2 \cdot \mathbf{j \cdot i} \cdot \alpha_3$, $\alpha_1, \alpha_2, \alpha_3 \in \mathbb{N}^*$.
\end{itemize}
\end{lemma}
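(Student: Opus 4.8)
The plan is to argue by a counting argument over the $\binom{n}{2}$ unordered pairs of variables, where I write $n := |\var(\alpha)|$. Call an unordered pair $\{i,j\}$ \emph{bad} if it fails to witness the lemma. A pair can fail for exactly one of three reasons, corresponding to the two displayed conditions: there is a $k$ with $\{i,j\} \subseteq L_k$ (call the pair \emph{$L$-bad}), there is a $k$ with $\{i,j\} \subseteq R_k$ (\emph{$R$-bad}), or both $i \cdot j$ and $j \cdot i$ occur as factors of $\alpha$ (call it \emph{crossing}). I would show that the total number of bad pairs is strictly less than $\binom{n}{2}$ whenever $n > 6$, so that a good pair survives and yields the desired $i,j$. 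The ingredient exploited throughout is that every variable occurs exactly twice, so $|\alpha| = 2n$ and $|L_x|, |R_x| \le 2$ for every $x \in \var(\alpha)$.

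First I would bound the neighbourhood failures. Since $|L_k| \le 2$, a pair $\{i,j\}$ is $L$-bad precisely when $L_k = \{i,j\}$ for some $k$ (with $i,j \neq \varepsilon$); hence each variable $k$ is responsible for at most one $L$-bad pair, giving at most $n$ of them, and symmetrically at most $n$ $R$-bad pairs. For the crossing pairs I would use that, because $i$ and $j$ occur only twice each, $i \cdot j$ and $j \cdot i$ cannot both occur twice (two occurrences of $i \cdot j$ already consume both copies of $i$ and of $j$). Thus each crossing pair uses up one $i \cdot j$ bigram and one $j \cdot i$ bigram, distinct crossing pairs use disjoint bigrams, and $\alpha$ has only $2n-1$ bigrams in total; so there are at most $\lfloor (2n-1)/2 \rfloor = n-1$ crossing pairs.

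Summing these bounds, the number of bad pairs is at most $n + n + (n-1) = 3n-1$, and $3n-1 < \binom{n}{2}$ holds exactly when $n^2 - 7n + 2 > 0$, i.e. for $n \ge 7$, which is precisely the hypothesis $n > 6$. Hence some pair $\{i,j\}$ is good. Being neither $L$-bad nor $R$-bad, it satisfies the first condition directly; and since it is not crossing, at most one of $i \cdot j$, $j \cdot i$ is a factor of $\alpha$, so the forbidden factorisation $\alpha = \alpha_1 \cdot i \cdot j \cdot \alpha_2 \cdot j \cdot i \cdot \alpha_3$ — which requires both bigrams — is impossible in either labelling. Taking the two variables of this pair as $i$ and $j$ then establishes the lemma.

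I expect the main obstacle to be making the three estimates tight enough to reach the threshold $n > 6$ rather than a weaker one such as $n \ge 8$; the arithmetic fails at $n = 7$ if any bound is off by a linear term. The decisive point is therefore the constraint $|\alpha|_x = 2$, which simultaneously forces $|L_k|, |R_k| \le 2$ (so each variable kills at most one pair per side) and caps the bigram count at $2n-1$ (so crossing pairs number at most $n-1$). A secondary subtlety is the handling of the second condition: I deliberately treat \emph{every} crossing pair as bad, which is conservative — by analysing the relative order of the two forbidden factors one can check that a crossing pair is in fact still usable under a suitable labelling, so the conjunction of the two conditions is genuinely achievable already for $n \ge 6$ — but the conservative count is cleaner and pins the sufficient threshold at exactly $n > 6$.
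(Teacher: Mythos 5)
Your proof is correct and follows essentially the same counting argument as the paper: bound the number of pairs excluded by each of the two conditions (the paper obtains $2n-2$ for the neighbourhood condition by discarding the sets containing $\varepsilon$, and $n-1$ for the crossing condition, versus your $2n$ and $n-1$) and compare the total with $\binom{n}{2}$, which exceeds it precisely for $n>6$. Your closing aside that a crossing pair always remains usable under a suitable relabelling is not quite right (a factor $i\cdot j\cdot i\cdot j$ defeats both labellings), but your argument does not rely on it, since you conservatively count every crossing pair as bad.
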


\begin{proof}
Let $n := |\var(\alpha)|$. Since every variable occurs exactly twice in $\alpha$, it directly follows that, for every $x \in \var(\alpha)$, $|R_x| \leq 2$ and $|L_x| \leq 2$. By omitting the neighbourhood sets containing $\varepsilon$, we have at most $2n-2$ sets of size $2$. Besides, it can be verified with little effort that $\alpha$ contains at most $n-1$ different factors $i \cdot j$, $i,j \in \var(\alpha)$, $i\neq j$, such that $j \cdot i \sqsubseteq \alpha$ (e.\,g., for $n:=4$, $\alpha := 1 \cdot 2 \cdot 3 \cdot 4 \cdot 4 \cdot 3 \cdot 2 \cdot 1$ has $3$ different factors $i\cdot j$, $i,j \in \var(\alpha)$, $i\neq j$, satisfying $j \cdot i \sqsubseteq \alpha$). Assume to the contrary that, for every $i,j\in\var(\alpha)$, one of the following cases is satisfied:
\begin{itemize}
\item there exists a $k \in \var(\alpha)$ with $\{i,j\} \subseteq L_k$ or $\{i,j\} \subseteq R_k$, or
\item $\alpha = \alpha_1 \cdot \mathbf{i \cdot j} \cdot \alpha_2 \cdot \mathbf{j \cdot i} \cdot \alpha_3$, $\alpha_1, \alpha_2, \alpha_3 \in \mathbb{N}^*$.
\end{itemize}
As mentioned above, the maximum number of pairs that are covered by the first case is $2n-2$, and for the second case it is $n-1$. On the other hand, since $|\var(\alpha)|=n$, there exist ${n \choose 2}$ different pairs of variables. However, for $n>6$, we have ${n\choose 2} > (2n-2) + (n-1)$, which contradicts the assumption.
\end{proof}
Hence, whenever a pattern $\alpha$ is not a fixed point, the conditions of Theorem~\ref{thm:pair_not_covered_mn} are automatically satisfied if $\alpha$ contains at least seven distinct variables and all of its variables occur exactly twice. Using a less elegant reasoning than the one on Lemma~\ref{lem:greater_6}, we can extend this insight to all such patterns over at least four distinct variables. This yields the following result:
\begin{theorem}\label{thm:shortest_succinct}
Let $\alpha\in \mathbb{N}^+$ be a pattern with $|\var(\alpha)| > 3$ and, for every $x \in \var(\alpha)$, $|\alpha|_x = 2$. If $\alpha$ is not a fixed point of a nontrivial morphism, then there exist $i,j \in \var(\alpha)$, $i \neq j$, such that $\sigma_{i,j}$ is unambiguous with respect to $\alpha$.
\end{theorem}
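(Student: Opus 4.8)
The plan is to reduce the claim, via Theorem~\ref{thm:pair_not_covered_mn} (applied with $m=2$), to the purely combinatorial statement that a suitable pair $i,j$ exists. Concretely, Theorem~\ref{thm:pair_not_covered_mn} tells us that if $\alpha$ is not a fixed point, all variables occur exactly twice, and there is a pair $i,j \in \var(\alpha)$, $i \neq j$, such that (a) no $k$ satisfies $\{i,j\} \subseteq L_k$ or $\{i,j\} \subseteq R_k$, and (b) $\alpha$ does not factorise as $\alpha_1 \cdot \mathbf{i\cdot j} \cdot \alpha_2 \cdot \mathbf{j\cdot i} \cdot \alpha_3$, then $\sigma_{i,j}$ is unambiguous. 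Lemma~\ref{lem:greater_6} already supplies such a pair whenever $|\var(\alpha)| > 6$, purely by the counting argument ${n \choose 2} > (2n-2)+(n-1)$. Hence the entire content of Theorem~\ref{thm:shortest_succinct} beyond what is already proved lies in handling the finitely many ``small'' cardinalities $n \in \{4,5,6\}$, where the counting inequality fails and a ``good'' pair need not exist for combinatorial reasons alone.

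First I would dispose of the generic regime by invoking Lemma~\ref{lem:greater_6} directly: for $n > 6$ the two conditions of Theorem~\ref{thm:pair_not_covered_mn} are met, so $\sigma_{i,j}$ is unambiguous. For the remaining cases $n \in \{4,5,6\}$, my approach is a finite case analysis over patterns in canonical form. Since every variable occurs exactly twice and we may assume $\alpha$ is in canonical form, the number of such patterns of length $2n$ is finite and, for each fixed $n$, can be enumerated (up to renaming) in a structured way. The crucial observation to exploit is that the pattern being \emph{not} a fixed point is a strong constraint: by Lemma~\ref{lem:fixed_point}, if some variable $i$ has $\varepsilon \notin L_i$ and $R_k = \{i\}$ for all $k \in L_i$ (or the symmetric right-hand condition), then $\alpha$ \emph{is} a fixed point, so these degenerate neighbourhood configurations are excluded. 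The plan is to show that, once these fixed-point configurations are ruled out, at least one pair $i,j$ avoiding conditions (a) and (b) must still exist even though the crude count no longer guarantees it.

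The technical heart of the small-$n$ argument is to sharpen the counting from Lemma~\ref{lem:greater_6} using the fact that the ``bad'' pairs cannot all be realised simultaneously when $\alpha$ is not a fixed point. I would argue that the neighbourhood sets of size two that block condition (a) cannot all be distinct and independent: if too many pairs were covered by shared left- or right-neighbourhoods, the resulting local structure would force one of the fixed-point patterns of Lemma~\ref{lem:fixed_point}, a contradiction. Similarly, the reversal factors $\mathbf{i\cdot j}\cdots\mathbf{j\cdot i}$ blocking condition (b) are constrained because each variable occurs only twice, limiting how the positions of $i,j$ can interleave. The main obstacle, and the reason the excerpt describes this reasoning as ``less elegant'' than Lemma~\ref{lem:greater_6}, is precisely that for $n \in \{4,5,6\}$ the clean counting fails and one must instead verify by a somewhat tedious inspection of the finitely many canonical forms that a good pair always survives. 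I would organise this by splitting on how many pairs each blocking condition can cover, showing that the extremal configurations saturating the bounds $2n-2$ and $n-1$ are incompatible with $\alpha$ not being a fixed point, so that in every admissible pattern a pair satisfying both (a) and (b) remains, at which point Theorem~\ref{thm:pair_not_covered_mn} finishes the proof. \qed
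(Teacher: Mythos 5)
Your proposal follows exactly the route the paper indicates: invoke Lemma~\ref{lem:greater_6} together with Theorem~\ref{thm:pair_not_covered_mn} (with $m=2$) to settle $|\var(\alpha)|>6$, and handle $|\var(\alpha)|\in\{4,5,6\}$ by a separate argument showing that a pair satisfying both conditions still exists --- which is precisely the ``less elegant reasoning'' the paper alludes to and, in this extended abstract, omits. Note only that this small-$n$ analysis, the sole genuinely new content of the theorem beyond Lemma~\ref{lem:greater_6}, is sketched rather than executed in your write-up, so it would still need to be carried out explicitly for the proof to be complete.
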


Theorem~\ref{thm:shortest_succinct} does not only directly prove the correctness of Conjecture~\ref{conj:sigij} for all patterns that contain all their variables exactly twice, but it also allows a large set of patterns to be constructed for which the Conjecture holds true as well. This construction is specified as follows:
\begin{theorem}\label{thm:shortest_as_factor}
Let $\alpha:=\alpha_{1}\cdot\beta\cdot\alpha_{2}$ and $\gamma:=\alpha_{1}\cdot\alpha_{2}$ be patterns with $\alpha_{1},\alpha_{2},\beta \in \mathbb{N}^*$, such that
\begin{itemize}
\item $\gamma$ and $\beta$ are not a fixed point of a nontrivial morphism,
\item $|\var(\gamma)|>3$ and, for every $x \in \var(\gamma)$, $|\gamma|_x = 2$, or $|\var(\beta)|>3$ and, for every $x \in \var(\beta)$, $|\beta|_x = 2$, and
\item $\var(\gamma)\cap\var(\beta)=\emptyset$.
\end{itemize}
Then there exist $i,j \in \var(\alpha)$, $i \neq j$, such that $\sigma_{i,j}$ is unambiguous with respect to $\alpha$.
\end{theorem}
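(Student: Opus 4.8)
The plan is to reduce the unambiguity of $\sigma_{i,j}$ with respect to $\alpha$ to the already available unambiguity of suitable morphisms with respect to the two disjoint parts $\gamma$ and $\beta$, exploiting that the renaming $\sigma$ sends the disjoint variable sets $\var(\gamma)$ and $\var(\beta)$ to disjoint letter sets. First I would fix $i,j$: by the second hypothesis at least one of $\gamma,\beta$ satisfies the conditions of Theorem~\ref{thm:shortest_succinct} (more than three variables, each occurring exactly twice) and is not a fixed point, so that theorem yields $i,j$, $i\neq j$, lying in $\var(\gamma)\subseteq\var(\alpha)$ or $\var(\beta)\subseteq\var(\alpha)$, with $\sigma_{i,j}$ unambiguous with respect to that part. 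I would then record the two facts that drive the whole proof: \emph{regardless} of which part supplied $i,j$, the restriction of $\sigma_{i,j}$ to $\var(\gamma)$ is unambiguous with respect to $\gamma$, and its restriction to $\var(\beta)$ is unambiguous with respect to $\beta$. For the part that supplied $i,j$ this is Theorem~\ref{thm:shortest_succinct}; for the other part, $\sigma_{i,j}$ coincides there with $\sigma$ (as $i,j$ lie in the opposite variable set and $\var(\gamma)\cap\var(\beta)=\emptyset$), and a renaming is unambiguous with respect to any pattern $w$ that is not a fixed point of a nontrivial morphism: if $\tau'(w)=\sigma(w)$, then every letter of each image $\tau'(z)$ already occurs in $\sigma(\var(w))$, so $\phi:=\sigma^{-1}\circ\tau'$ is a well-defined morphism with $\phi(w)=w$, and $\tau'\neq\sigma$ would make $\phi$ nontrivial, i.e.\ $w$ a fixed point. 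Since both $\gamma$ and $\beta$ are not fixed points, this applies to whichever part did not supply $i,j$. (The trivial case $\beta=\varepsilon$, where $\alpha=\gamma$, is settled by Theorem~\ref{thm:shortest_succinct} alone.)

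Next I would set up the disjoint target alphabets. Let $A$ be the set of letters that $\sigma_{i,j}$ assigns to the variables of $\gamma$, and $B$ the set it assigns to those of $\beta$; since $\sigma$ is injective, $\var(\gamma)\cap\var(\beta)=\emptyset$, and collapsing $i$ and $j$ only identifies two letters on one side, we have $A\cap B=\emptyset$. Writing $P:=\sigma_{i,j}(\alpha_1)$, $M:=\sigma_{i,j}(\beta)$ and $S:=\sigma_{i,j}(\alpha_2)$, we get $\sigma_{i,j}(\alpha)=P\,M\,S$ with $P,S\in A^*$ and $M\in B^*$; thus the image has the shape $A^*B^*A^*$, in which the $B$-letters form the single contiguous block $M$ in the middle.

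Now assume $\tau(\alpha)=\sigma_{i,j}(\alpha)=PMS$. The heart of the argument -- and the step I expect to be the main obstacle -- is to show that $\tau$ \emph{respects this partition}, that is, $\tau(x)\in A^*$ for every $x\in\var(\gamma)$ and $\tau(y)\in B^*$ for every $y\in\var(\beta)$, equivalently that the factorisation is block-respecting: $\tau(\alpha_1)=P$, $\tau(\beta)=M$, $\tau(\alpha_2)=S$. The intended route is combinatorial: because $A\cap B=\emptyset$, the positions of $\tau(\alpha)$ carrying a $B$-letter are exactly the positions of the middle block $M$; on the preimage side the $\beta$-variables occur only inside the contiguous $\beta$-region of $\alpha$, while the $\gamma$-variables occur only outside it. A case analysis on where the two factor boundaries $|\tau(\alpha_1)|$ and $|\tau(\alpha_1)|+|\tau(\beta)|$ lie relative to $|P|$ and $|P|+|M|$, using that no $\gamma$-variable can have both of its two occurrences consistent with an image reaching into $M$ and that the unambiguity of $\sigma_{i,j}$ on $\gamma$ pins down the $A$-content of each $\gamma$-variable, should force the boundaries to $|P|$ and $|P|+|M|$, and hence the block-respecting factorisation. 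This is the delicate part, because a priori $\tau$ may be erasing and the boundaries could drift; the disjointness of $A$ and $B$ together with the localisation of $\var(\beta)$ to one block are exactly what should prevent this.

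Finally I would close the argument from block-respecting. Projecting onto $A$ (erasing all $B$-letters) turns the restriction of $\tau$ to $\var(\gamma)$ into a morphism $\rho$ with $\rho(\gamma)=PS=\sigma_{i,j}(\gamma)$, so the unambiguity of $\sigma_{i,j}$ with respect to $\gamma$ gives $\tau(x)=\sigma_{i,j}(x)$ for all $x\in\var(\gamma)$; in particular $\tau(\alpha_1)=P$ and $\tau(\alpha_2)=S$, and cancelling these in $P\,\tau(\beta)\,S=PMS$ yields $\tau(\beta)=M=\sigma_{i,j}(\beta)$. The unambiguity of $\sigma_{i,j}$ with respect to $\beta$ then forces $\tau(y)=\sigma_{i,j}(y)$ for all $y\in\var(\beta)$. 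Hence $\tau=\sigma_{i,j}$ on $\var(\alpha)$, contradicting the assumption of ambiguity, so $\sigma_{i,j}$ is unambiguous with respect to $\alpha$.
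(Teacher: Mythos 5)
The paper omits the proof of this theorem, so there is no official argument to compare against; judged on its own merits, your overall architecture is sound: the disjoint images $A:=\sigma_{i,j}(\var(\gamma))$ and $B:=\sigma_{i,j}(\var(\beta))$, the choice of $i,j$ via Theorem~\ref{thm:shortest_succinct} for whichever part satisfies the exactly-twice hypothesis, the observation that on the other part $\sigma_{i,j}$ is a renaming and hence unambiguous because that part is not a fixed point (your $\phi:=\sigma^{-1}\circ\tau'$ argument is correct), and the endgame deducing $\tau=\sigma_{i,j}$ from the block-respecting factorisation are all fine.

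The genuine gap is precisely the step you yourself label ``the delicate part'': showing that every $\tau$ with $\tau(\alpha)=PMS$ satisfies $\tau(x)\in A^*$ for $x\in\var(\gamma)$ and $\tau(y)\in B^*$ for $y\in\var(\beta)$. This is where essentially all the work of the theorem lives, and ``a case analysis \ldots should force the boundaries'' is a statement of intent, not a proof. Concretely, two things are missing. First, the dichotomy: since the $B$-positions of $PMS$ form a single interval $J$, one shows that any variable $z$ occurring at least twice whose image contains a $B$-letter must satisfy $\tau(z)\in B^+$ (if $\tau(z)$ mixed $A$- and $B$-letters, the first copy would have to abut an endpoint of $J$, leaving the second copy no room to place its $B$-letters inside $J$). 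But this only gives ``each image lies in $A^*$ or in $B^+$''; you must still rule out a $\gamma$-variable landing in $B^+$ (with $\beta$-variables erased to make room in $J$) and a $\beta$-variable landing in $A^+$ (pushed out into $P$ or $S$), and neither exclusion is immediate -- it needs an argument tracking the first and last occurrences of such a variable relative to the $\beta$-block. Second, your phrase ``no $\gamma$-variable can have both of its two occurrences \ldots'' tacitly assumes every $\gamma$-variable occurs exactly twice, whereas the hypothesis guarantees this for only one of $\gamma,\beta$; for the other part you only know (because it is not a fixed point) that each of its variables occurs at least twice \emph{or} that it is a single variable occurring once -- and in that degenerate case $\alpha$ itself becomes a fixed point of a nontrivial morphism and, by Theorem~\ref{thm:prolix_ambig}, no $\sigma_{i,j}$ can work, so this case must be explicitly excluded or discussed rather than swept into the general argument.
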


In the remainder of this section, we shall not directly address the morphism $\sigma_{i,j}$ any longer. Hence, we focus on Conjecture~\ref{conj:variable}, and we use an approach that differs quite significantly from those above: We consider words that cannot be morphic images of a pattern under any ambiguous 1-uniform morphism, and we construct suitable morphic preimages from these words. This method yields another major set of patterns for which Conjecture~\ref{conj:variable} is satisfied.\par
Our corresponding technique is based on the well-known concept of \emph{de Bruijn sequences}. Since de Bruijn sequences are cyclic, which does not fit with our subject, we introduce a non-cyclic variant:
\begin{definition}\label{def:n_DB_non_cyclic}
A \emph{non-cyclic De Bruijn sequence} (of order $n$) is a word over a given alphabet $\Sigma$ (of size $k$) for which all possible words of length $n$ in $\Sigma^*$ appear exactly once as factors of this sequence. We denote the set of all non-cyclic De Bruijn sequences of order $n$ by $B'(k,n)$.
\end{definition}\noindent
For example, the word $w_0 := aabacbbcca$ is a non-cyclic de Bruijn sequence in $B'(3,2)$ if we assume $\Sigma := \{ a, b, c \}$.\par
It can now be easily understood that a non-cyclic de Bruijn sequence cannot be a morphic image of any pattern under ambiguous 1-uniform morphisms:
\begin{theorem}\label{thm:un_factors_once}
Let $\Sigma$ be an alphabet, and let $\alpha \in \mathbb{N}^+$ be a pattern satisfying, for every $x \in \var(\alpha)$, $|\alpha|_x \geq 2$. Let $\sigma: \mathbb{N}^* \to \Sigma^*$ be a 1-uniform morphism such that, for every $u_1 u_2 \sqsubseteq \sigma(\alpha)$, $u_{1}, u_{2} \in \Sigma$, the factor $u_1 u_2$ occurs in $\sigma(\alpha)$ exactly once. Then $\sigma$ is unambiguous with respect to $\alpha$.
\end{theorem}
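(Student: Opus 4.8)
The plan is to argue by contradiction. Suppose $\sigma$ is ambiguous with respect to $\alpha$; then there is a morphism $\tau: \mathbb{N}^* \to \Sigma^*$ with $\tau(\alpha) = \sigma(\alpha)$ and $\tau(x_0) \neq \sigma(x_0)$ for some $x_0 \in \var(\alpha)$. The overall strategy is to show that this hypothetical $\tau$ must in fact be 1-uniform, and that any 1-uniform $\tau$ agreeing with $\sigma$ on $\alpha$ is forced to coincide with $\sigma$ on every variable of $\alpha$, contradicting the choice of $x_0$.

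The crux is to exploit the ``exactly once'' condition in order to bound the lengths of the blocks $\tau(x)$. Suppose some variable $x$ satisfied $|\tau(x)| \geq 2$, and let $u_1 u_2$ be the length-$2$ prefix of $\tau(x)$, with $u_1, u_2 \in \Sigma$. Since $|\alpha|_x \geq 2$, the variable $x$ occurs at least twice in $\alpha$, so the block $\tau(x)$ is written at least twice in $\tau(\alpha) = \sigma(\alpha)$, and hence the factor $u_1 u_2$ appears (at a fixed offset inside each such block) at two distinct positions of $\sigma(\alpha)$. As $u_1 u_2 \sqsubseteq \sigma(\alpha)$, this contradicts the hypothesis that every length-$2$ factor of $\sigma(\alpha)$ occurs exactly once. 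Therefore $|\tau(x)| \leq 1$ for every $x \in \var(\alpha)$.

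It then remains to rule out that $\tau$ erases any variable, which follows from a length count. Because $\sigma$ is 1-uniform, $|\sigma(\alpha)| = \sum_{x \in \var(\alpha)} |\alpha|_x$, whereas $|\tau(\alpha)| = \sum_{x \in \var(\alpha)} |\alpha|_x \, |\tau(x)|$; from $\tau(\alpha) = \sigma(\alpha)$ we obtain $\sum_{x} |\alpha|_x \, (|\tau(x)| - 1) = 0$. By the previous step each summand is nonpositive, and since $|\alpha|_x \geq 2 > 0$ this forces $|\tau(x)| = 1$ for every $x$, so $\tau$ is itself 1-uniform. Writing $\alpha = a_1 a_2 \cdots a_\ell$, the equality $\tau(\alpha) = \sigma(\alpha)$ of two words of common length $\ell$ now yields $\tau(a_p) = \sigma(a_p)$ at each position $p$, hence $\tau(x) = \sigma(x)$ for all $x \in \var(\alpha)$, contradicting $\tau(x_0) \neq \sigma(x_0)$.

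I expect the only genuine content to lie in the second paragraph, namely the passage from a long block $\tau(x)$ to a repeated length-$2$ factor. The point requiring a little care is that the two copies of $u_1 u_2$ genuinely sit at different positions of $\sigma(\alpha)$; this is precisely where the assumption $|\alpha|_x \geq 2$ is indispensable, since distinct occurrences of $x$ are mapped by $\tau$ to disjoint consecutive blocks and therefore to distinct offsets in the image. The remaining steps are routine length bookkeeping together with the elementary fact that 1-uniform morphisms agreeing on a word agree on each of its letters.
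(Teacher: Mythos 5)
Your proof is correct: the key step (a variable $x$ with $|\tau(x)|\geq 2$ and $|\alpha|_x\geq 2$ forces a repeated length-$2$ factor in $\sigma(\alpha)$) is sound, and the length-counting argument cleanly rules out erasing images and pins $\tau$ down to $\sigma$. The paper omits its proof of this theorem, but your argument is exactly the style of reasoning it uses elsewhere (e.g.\ the occurrence-counting of a factor $uv\sqsubseteq\tau(k)$ in the proof of Theorem~\ref{thm:pair_not_covered_mn}), so there is nothing further to add.
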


This insight implies that every pattern that can be mapped by a 1-uniform morphism to a de Bruijn sequence necessarily is not a fixed point, and thus, fits with Conjecture~\ref{conj:variable}:
\begin{corollary}\label{cor:fp_factors_once}
Let $\Sigma$ be an alphabet, and let $\alpha \in \mathbb{N}^+$ be a pattern satisfying, for every $x \in \var(\alpha)$, $|\alpha|_x\geq2$. Let $\sigma: \mathbb{N}^*\to \Sigma^*$ be a 1-uniform morphism such that, for every $u_{1} u_{2} \sqsubseteq\sigma(\alpha)$, $u_1, u_2 \in \Sigma$, the factor $u_1 u_2$ occurs in $\sigma(\alpha)$ exactly once. Then $\alpha$ is not a fixed point of a nontrivial morphism.
\end{corollary}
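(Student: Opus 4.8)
The plan is to derive this corollary immediately by combining Theorem~\ref{thm:un_factors_once} with the contrapositive of Theorem~\ref{thm:prolix_ambig}, so essentially no new combinatorial work is required. First I would observe that the hypotheses of the corollary are verbatim those of Theorem~\ref{thm:un_factors_once}: the pattern $\alpha$ satisfies $|\alpha|_x \geq 2$ for every $x \in \var(\alpha)$, and $\sigma$ is a 1-uniform morphism each of whose length-$2$ factors $u_1 u_2 \sqsubseteq \sigma(\alpha)$ occurs exactly once in $\sigma(\alpha)$. Hence Theorem~\ref{thm:un_factors_once} applies directly and guarantees that $\sigma$ is unambiguous with respect to $\alpha$.

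Next I would record the simple observation that $\sigma$ is nonerasing. Indeed, since $\sigma$ is 1-uniform it maps every variable to an image of length exactly $1$, so in particular $\sigma(x) \neq \varepsilon$ for every $x$. Thus $\sigma$ is a nonerasing morphism from $\mathbb{N}^*$ into $\Sigma^*$ that is unambiguous with respect to $\alpha$.

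Finally I would invoke Theorem~\ref{thm:prolix_ambig} in contrapositive form. That theorem states that if $\alpha$ is a fixed point of a nontrivial morphism, then \emph{every} nonerasing morphism $\sigma: \mathbb{N}^* \to \Sigma^*$ is ambiguous with respect to $\alpha$; equivalently, the existence of even a single nonerasing morphism that is unambiguous with respect to $\alpha$ forces $\alpha$ not to be a fixed point of a nontrivial morphism. Since the morphism $\sigma$ supplied by the corollary's hypotheses is exactly such a witness, we conclude that $\alpha$ is not a fixed point, as claimed.

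There is no genuine obstacle here: the statement is a routine consequence of the two cited results. The only points demanding any care are confirming that the corollary's hypotheses coincide exactly with those needed for Theorem~\ref{thm:un_factors_once}, and noting explicitly that 1-uniformity implies nonerasingness so that Theorem~\ref{thm:prolix_ambig} is applicable to $\sigma$; once these are in place the contrapositive argument closes immediately.
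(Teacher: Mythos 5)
Your proposal is correct and follows exactly the route the paper intends: the corollary is presented as an immediate consequence of Theorem~\ref{thm:un_factors_once} (which yields the unambiguity of $\sigma$) combined with the contrapositive of Theorem~\ref{thm:prolix_ambig} (which then forces $\alpha$ not to be a fixed point, since a 1-uniform morphism is nonerasing). No gaps.
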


We now show how we can construct patterns that fit with the requirements of Theorem~\ref{thm:un_factors_once} and Corollary~\ref{cor:fp_factors_once}:
\begin{definition}\label{def:Pi_DB}
Let $\Sigma := \{ a_1, a_2, \ldots, a_k \}$. Let $B'(k,2)$ be the set of non-cyclic de Bruijn sequences of order $2$ over $\Sigma$. Then $\Pi_{DB}(k) \subseteq \mathbb{N}^*$ is the set of all patterns that can be constructed as follows: For every $w \in B'(k,2)$ and every letter $a_j$ in $w$, all $n_j$ occurrences of $a_j$ are replaced by $\lfloor n_j / 2 \rfloor$ different variables from a set $N_j :=\{x_{j_1}, x_{j_2}, \ldots, x_{j_{\lfloor n_j / 2 \rfloor}}\} \subseteq \mathbb{N}$, such that the following conditions are satisfied:
\begin{itemize}
\item for every $x \in N_j$, $|\alpha|_x > 1$,
\item for all $i, i'$, $1 \leq i,i' \leq k$, with $i \neq i'$, $N_i \cap N_{i'} = \emptyset$, and
\item for all $i$, $1 \leq i \leq k$, the variables in $N_i$ are assigned to occurrences of $a_i$ in a way such that the resulting pattern is in canonical form.
\end{itemize}
\end{definition}\noindent
For instance, with regard to our above example word $w_0 = aabacbbcca \in B'(3,2)$, Definition~\ref{def:Pi_DB} says that, e.\,g., the pattern $1 \cdot 1 \cdot 2 \cdot 3 \cdot 4 \cdot 2 \cdot 2 \cdot 4 \cdot 4 \cdot 3$ is contained in $\Pi_{DB}(3)$.\par
From this construction, it follows that Conjecture~\ref{conj:variable} holds true for every pattern in $\Pi_{DB}(k)$:
\begin{theorem}\label{thm:Pi_DB}
Let $\Sigma := \{ a_1, a_2, \ldots, a_k \}$, $k \geq 3$. Then, for every $\alpha \in \Pi_{DB}(k)$,
\begin{itemize}
\item $\var(\alpha)$ contains at least $k+1$ elements, and
\item there exists a 1-uniform morphism $\sigma: \mathbb{N}^* \to \Sigma^*$ that is unambiguous with respect to $\alpha$.
\end{itemize}
\end{theorem}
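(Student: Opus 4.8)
The plan is to show that $\alpha$ can be mapped, by the ``obvious'' collapsing morphism, back onto the non-cyclic de Bruijn sequence from which it was built, and then to read off both claims from this single fact together with Theorem~\ref{thm:un_factors_once}. So fix $\alpha \in \Pi_{DB}(k)$ and let $w \in B'(k,2)$ and the pairwise disjoint sets $N_1, \dots, N_k \subseteq \mathbb{N}$ be the data of Definition~\ref{def:Pi_DB} that yield $\alpha$; put $n_j := |w|_{a_j}$, so that $|N_j| = \lfloor n_j / 2 \rfloor$.

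First I would record two elementary facts about $w$. Since, by Definition~\ref{def:n_DB_non_cyclic}, $w$ contains each of the $k^2$ words of length two over $\Sigma$ exactly once, and a word of length $L$ has exactly $L-1$ factors of length two, $w$ has length $k^2 + 1$; hence $\sum_{j=1}^{k} n_j = k^2 + 1$. Moreover, for every $j$ the $k$ distinct factors $a_j a_1, \dots, a_j a_k$ all occur in $w$, and their leading letters sit at $k$ pairwise distinct positions (two of them would otherwise force a single position to carry two different successors), so $n_j \geq k$.

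From these two facts the variable count follows quickly. As $n_j \geq k \geq 3$, each letter contributes $|N_j| = \lfloor n_j/2 \rfloor \geq 1$; and since $\sum_j n_j = k^2 + 1 > k^2$, some letter $a_{j_0}$ must satisfy $n_{j_0} \geq k + 1 \geq 4$, so that $|N_{j_0}| \geq 2$. Because the $N_j$ are disjoint and together constitute $\var(\alpha)$, summing gives $|\var(\alpha)| = \sum_j |N_j| \geq (k-1) + 2 = k+1$, which is the first assertion. For the second assertion I would define the 1-uniform morphism $\sigma : \mathbb{N}^* \to \Sigma^*$ by $\sigma(x) := a_j$ whenever $x \in N_j$ (fixing $\sigma$ arbitrarily on the variables not occurring in $\alpha$). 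As $\alpha$ is obtained from $w$ by replacing, for each $j$, every occurrence of $a_j$ by some variable of $N_j$, applying $\sigma$ restores $w$ letter by letter, i.e.\ $\sigma(\alpha) = w$. By Definition~\ref{def:Pi_DB} every variable of $\alpha$ occurs at least twice, and since $w$ is a non-cyclic de Bruijn sequence of order two, every factor $u_1 u_2 \sqsubseteq \sigma(\alpha) = w$ with $u_1, u_2 \in \Sigma$ occurs in $\sigma(\alpha)$ exactly once. These are precisely the hypotheses of Theorem~\ref{thm:un_factors_once}, which then delivers that $\sigma$ is unambiguous with respect to $\alpha$.

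The unambiguity is therefore essentially free once the identity $\sigma(\alpha) = w$ is in place, and the only genuine bookkeeping lies in the counting step: establishing $|w| = k^2 + 1$ and the per-letter bound $n_j \geq k$, and then extracting the surplus variable from the ``$+1$'' in the length. The point that will need a line of care is that the $k$ factors starting with a fixed letter really do account for $k$ distinct occurrences of that letter, so that no $N_j$ is empty and the pigeonhole argument for $|N_{j_0}| \geq 2$ goes through for every $k \geq 3$.
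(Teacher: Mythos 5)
Your proposal is correct and takes essentially the same route as the paper's proof: the same counting argument ($|w| = k^2+1$, then pigeonhole to extract a letter with at least $k+1 \geq 4$ occurrences and hence at least two variables) for the first claim, and the same collapsing morphism $\sigma(x) := a_j$ for $x \in N_j$ combined with Theorem~\ref{thm:un_factors_once} for the second. Your explicit per-letter bound $n_j \geq k$ merely makes precise the point, left implicit in the paper, that every $N_j$ is nonempty.
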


\begin{proof}
We begin this proof with the first statement of the theorem: It is obvious that there are $k^2$ different words of length $2$ over $\Sigma$. The shortest word that contains $k^2$ factors of length $2$ has length $k^2 + 1$, which means that this is the length of any word $w \in B'(k,2)$. Thus, there must be at least one letter in $w$ that has at least $\lceil (k^2 + 1)/k \rceil$ occurrences. Since we assume $k \geq 3$, this means that this letter has at least $4$ occurrences. From Definition~\ref{def:Pi_DB} it then follows that this letter is replaced by at least two different variables when a pattern $\alpha \in \Pi_{DB}(k)$ is generated from $w$. Since all other letters in $w$ must be replaced by at least one variable, this shows that $|\var(\alpha)| \geq k + 1$.\par
Concerning the second statement, we define $\sigma$ by, for every $j$, $1 \leq j \leq k$, and for every $x \in N_j$, $\sigma(x) := a_j$. Thus, $\sigma$ is 1-uniform, and $\sigma(\alpha) \in B'(k,2)$. This implies that, for every $u_1 u_2 \sqsubseteq \sigma(\alpha)$, $u_1, u_2 \in \Sigma$, the factor $u_1 u_2$ occurs in $\sigma(\alpha)$ exactly once. Consequently, according to Theorem~\ref{thm:un_factors_once}, $\sigma$ is unambiguous with respect to $\alpha$.
\end{proof}

We conclude this paper with a statement on the cardinality of $\Pi_{DB}(k)$, demonstrating that the use of de Bruijn sequences indeed leads to a rich class of patterns $\alpha$ with unambiguous 1-uniform morphisms, and that these morphisms, in general, can even have a target alphabet of size much less than $\var(\alpha) - 1$ (as featured by Theorem~\ref{thm:Pi_DB}):
\begin{theorem}\label{thm:card_Pi_DB}
Let $k \in \mathbb{N}$. Then $|\Pi_{DB}(k)| \geq k!^{(k-1)}$, and, for every $\alpha \in \Pi_{DB}(k)$,
\begin{displaymath}
|\var(\alpha)| = (k - 1) \lfloor k/2 \rfloor + \lfloor (k+1)/2 \rfloor \, .
\end{displaymath}
\end{theorem}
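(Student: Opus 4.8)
The plan is to treat the two assertions separately: the exact value of $|\var(\alpha)|$ will follow from an occurrence-count analysis of non-cyclic de Bruijn sequences, while the lower bound on $|\Pi_{DB}(k)|$ will rest on the classical enumeration of de Bruijn sequences together with a counting argument that controls how many sequences can give rise to the same canonical pattern.

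First I would determine, for an arbitrary $w \in B'(k,2)$, the number of occurrences $n_j$ of each letter $a_j$. Since every one of the $k^2$ words of length $2$ over $\Sigma$ occurs exactly once as a factor, and a word of length $\ell$ has $\ell-1$ factors of length $2$, the length of $w$ is forced to be $k^2+1$. Now fix a letter $a_j$: exactly $k$ length-$2$ factors start with $a_j$ (namely $a_j a_1, \ldots, a_j a_k$), so $a_j$ occupies exactly $k$ of the first $k^2$ positions of $w$, and symmetrically exactly $k$ of the last $k^2$ positions. Comparing these counts shows that $n_j = k + [\,a_j \text{ is the first letter of } w\,] = k + [\,a_j \text{ is the last letter of } w\,]$, whence the first and last letters of $w$ must coincide. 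Consequently exactly one letter occurs $k+1$ times and each of the remaining $k-1$ letters occurs $k$ times (consistently, $(k+1) + (k-1)k = k^2+1$). By Definition~\ref{def:Pi_DB}, the letter with $n_j$ occurrences contributes $\lfloor n_j/2 \rfloor$ variables, so $|\var(\alpha)| = \lfloor (k+1)/2 \rfloor + (k-1)\lfloor k/2 \rfloor$; note in particular that this value is independent of the chosen sequence and of the chosen assignment.

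For the cardinality I would invoke the well-known fact that the number of cyclic de Bruijn sequences of order $2$ over a $k$-letter alphabet is $(k!)^k / k^2$. Each such sequence is primitive (aperiodicity is immediate, since periodicity would force a length-$2$ factor to repeat), hence has $k^2$ distinct rotations, and each rotation yields a distinct element of $B'(k,2)$ after appending the repeated first letter; this gives $|B'(k,2)| = (k!)^k$. Fixing one concrete valid assignment rule defines a map $\pi$ from $B'(k,2)$ into $\Pi_{DB}(k)$, and since $|\Pi_{DB}(k)|$ is at least the size of the image of $\pi$, it suffices to bound the fibres of $\pi$ from above by $k!$, which would yield $|\Pi_{DB}(k)| \geq (k!)^k / k! = (k!)^{k-1}$.

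The hard part will be exactly this fibre bound. A renaming of the target alphabet $\Sigma$ sends $w$ to another de Bruijn sequence that induces the very same partition of positions into variable classes, and hence the same canonical pattern; these renamings account for a subset of each fibre of size at most $k!$, and the substance of the argument is to show that they account for the \emph{entire} fibre. Equivalently, one must show that the partition of $\var(\alpha)$ into the letter classes $N_1, \ldots, N_k$ is recoverable from $\alpha$ up to relabelling: I would argue that the requirement that collapsing the classes produce a word in which every length-$2$ factor is distinct, combined with the occurrence counts established above, forces the grouping to be unique. Verifying this rigidity — that no alternative merging of variable classes into $k$ letters can again yield a de Bruijn sequence — is the crux of the proof and the step I expect to be the most delicate.
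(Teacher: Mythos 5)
A preliminary remark: the paper is an extended abstract and contains no proof of Theorem~\ref{thm:card_Pi_DB}, so your attempt can only be checked against the statement and the surrounding machinery, not against an official argument. Measured that way, your first half is correct and complete: the length $k^2+1$ of any $w \in B'(k,2)$, the observation that each letter occupies exactly $k$ of the first $k^2$ positions and exactly $k$ of the last $k^2$ positions, the resulting fact that the first and last letters of $w$ coincide, and the occurrence profile (one letter $k+1$ times, each of the other $k-1$ letters exactly $k$ times, with $(k+1)+(k-1)k = k^2+1$) are all sound and yield $|\var(\alpha)| = (k-1)\lfloor k/2\rfloor + \lfloor (k+1)/2\rfloor$ independently of the chosen sequence and assignment. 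Your count $|B'(k,2)| = (k!)^k$, obtained from the classical number $(k!)^k/k^2$ of cyclic de Bruijn sequences, primitivity, and the append-the-first-letter correspondence, is likewise fine, and dividing by the $k!$ renamings is the natural route to the bound $(k!)^{k-1}$.

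The genuine gap is exactly the fibre bound that you yourself flag but do not prove. Your argument needs that no two sequences from \emph{different} renaming classes produce the same canonical pattern, i.e., that the partition of $\var(\alpha)$ into the letter classes $N_1, \ldots, N_k$ is recoverable from $\alpha$ up to relabelling. For $k \leq 3$ this is forced by multiplicities alone: for $k=3$, for instance, the occurrence counts $4,3,3$ force the two variables with $|\alpha|_x = 2$ into one class and each variable with $|\alpha|_x = 3$ into a singleton class. But for $k \geq 4$ this forcing disappears: for even $k$ every variable except one occurs exactly twice and every class contains $\lfloor k/2\rfloor \geq 2$ variables, so a large number of regroupings are a priori admissible, and one must use the de Bruijn property itself to exclude that some alternative merging of variables into $k$ classes again collapses $\alpha$ to a word in which every length-$2$ factor occurs exactly once. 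You correctly identify this rigidity as the crux, but you offer no argument for it beyond restating it; without it the fibres of your map could exceed $k!$ and the inequality $|\Pi_{DB}(k)| \geq (k!)^{k-1}$ does not follow. (Note also that Theorem~\ref{thm:un_factors_once} does not help here: unambiguity compares two morphisms with the \emph{same} image word, whereas your two hypothetical collapses of $\alpha$ have different images.) So the proposal is a correct and well-organised skeleton, with one essential combinatorial lemma left open.
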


\paragraph{Acknowledgements}
The authors wish to thank the anonymous referees for their helpful remarks and suggestions.

\bibliographystyle{eptcs}

\end{document}